\documentclass[journal]{IEEEtran}

\makeatletter
\def\endthebibliography{%
	\def\@noitemerr{\@latex@warning{Empty `thebibliography' environment}}%
	\endlist
}
\makeatother

\usepackage{cite}

\usepackage[pdftex]{graphicx} 

\usepackage[caption=false,font=footnotesize]{subfig}
\usepackage[export]{adjustbox}
  
\usepackage{amsmath}
\usepackage{amsfonts}
\usepackage{bm}

\usepackage{filecontents}
\usepackage{amssymb}
\usepackage{color}

\usepackage{epsfig}

\usepackage{verbatim}

\usepackage{url}

\usepackage{algorithm, tabularx}

\usepackage{lettrine}

\usepackage{lipsum}

\usepackage{siunitx}

\usepackage{soul,color}

\usepackage{mathrsfs}

\usepackage{array}

\usepackage[inline]{enumitem}

\usepackage{dblfloatfix} 

\usepackage{algorithm}

\usepackage{algpseudocode}

\makeatletter
\DeclareRobustCommand{\iscircle}{\mathord{\mathpalette\is@circle\relax}}
\newcommand\is@circle[2]{%
  \begingroup
  \sbox\z@{\raisebox{\depth}{$\m@th#1\bigcirc$}}%
  \sbox\tw@{$#1\square$}%
  \resizebox{!}{\ht\tw@}{\usebox{\z@}}%
  \endgroup
}
\makeatother

\usepackage{hhline}

\usepackage{setspace}
\usepackage{multirow}
\usepackage{array}
\newcolumntype{L}[1]{>{\raggedright\let\newline\\\arraybackslash\hspace{0pt}}m{#1}}
\newcolumntype{C}[1]{>{\centering\let\newline\\\arraybackslash\hspace{0pt}}m{#1}}
\newcolumntype{R}[1]{>{\raggedleft\let\newline\\\arraybackslash\hspace{0pt}}m{#1}}

\DeclareMathOperator{\maxdot}{max.}
\DeclareMathOperator{\mindot}{min.}

\newcommand{\ie}{\textit{i}.\textit{e}.\,}
\newcommand{\eg}{\textit{e}.\textit{g}.\,}

\begin{document}
	
\title{Geometry-Aware DRL for Multi-Subband Scheduling in Satellite-Assisted UAM Networks}




	\author{Hyung-Joo Moon,~\IEEEmembership{Member,~IEEE}, Sangha Park,~\IEEEmembership{Graduate Student Member,~IEEE},\\ Chan-Byoung Chae,~\IEEEmembership{Fellow,~IEEE}, and Robert W. Heath, Jr.,~\IEEEmembership{Fellow,~IEEE}

		\thanks{H.-J. Moon, S. Park, and C.-B. Chae are with the School of Integrated Technology, Yonsei University, Seoul 03722, South Korea (e-mail: \{moonhj, shp234, cbchae\}@yonsei.ac.kr).}
        \thanks{R. W. Heath, Jr. is with the Department of Electrical and Computer Engineering, University of California San Diego, La Jolla, CA, 92093 USA (email: rwheathjr@ucsd.edu)}
        }

	\maketitle
	
	\begin{abstract}

In this paper, we investigate downlink scheduling for urban air mobility (UAM) in a cooperative space-air-ground integrated network. Multiple ground stations (GSs) employ narrow three-dimensional beams and share spectrum across multiple subbands, while a satellite provides an orthogonal-band service option. Rapidly time-varying geometry and directional interference require joint decisions on base station association, GS subband assignment, and transmit powers. We formulate a finite-horizon mixed discrete-continuous problem that maximizes sum rate while penalizing handovers and GS overload, using only UAM positions and velocities. To address the combinatorial scheduling problem, we propose GeoSetPPO, a geometry-aware set-attention proximal policy optimization (PPO) method that outputs per-UAM discrete association and subband decisions with permutation-invariant representations. Conditioned on each schedule, GS powers are computed by a per-slot successive convex approximation (SCA) module under per-GS power budgets and minimum signal-to-interference-plus-noise ratio (SINR) constraints. To reduce training cost and improve stability, we adopt a two-stage training strategy that transitions reward evaluation from uniform power to SCA-based power allocation. Simulations demonstrate stable convergence, higher returns than multi-layer perceptron (MLP)- and Transformer-based PPO under the considered training setting, and favorable reward and schedule-feasibility performance relative to algorithm-based and distance-based schedulers. In the larger evaluated network, GeoSetPPO also reduces the scheduling latency from $40.84$~ms to $2.90$~ms relative to the previous algorithm-based method.

	\end{abstract}

	\begin{IEEEkeywords}
        Urban air mobility, low-altitude economy, deep reinforcement learning, proximal policy optimization.
	\end{IEEEkeywords}

	\IEEEpeerreviewmaketitle

\section{Introduction}

\IEEEPARstart{U}{rban} air mobility (UAM) is emerging as a key component of the rapidly expanding low-altitude economy (LAE), enabling new forms of passenger transport, logistics, inspection, and public-safety operations~\cite{laeref}. A critical requirement for large-scale UAM deployment is robust wireless connectivity that supports command-and-control, navigation support, and data services with stringent reliability and latency targets~\cite{202308commag}. Unlike conventional terrestrial users, UAM vehicles move in three-dimensional (3D) space, exhibit rapidly changing geometry relative to the infrastructure, and can experience interference patterns that evolve quickly as aerial users move into or out of each other's angular vicinity. These characteristics make UAM communications inherently mobility-coupled and strongly dependent on spatial configuration.

Motivated by this setting, we develop a centralized downlink architecture where multiple ground stations (GSs) cooperate under a network controller that collects mobility information and coordinates scheduling decisions. The GS-to-UAM channels exhibit strong geometry dependence, where link quality and interference are largely determined by the relative positions and angles between GSs and aerial users~\cite{201903tcom}. This property enables mobility information such as position and velocity to capture the underlying channel conditions, reducing reliance on explicit CSI feedback. At the same time, narrow beams create strongly structured interference: UAMs that are close in angle may suffer significant inter-beam interference, and inter-GS interference can become severe when neighboring GSs steer beams toward nearby aerial directions. As mobility continuously reshapes these angular relations, link association, spectrum usage, and power allocation must be coordinated over time in a rapidly varying network~\cite{myjsac,202402twc,201908iotj}.

To enhance interference management and service robustness, we adopt a cooperative space-air-ground integrated network (SAGIN) in which a satellite operates on a separate frequency band and can serve a subset of UAMs in parallel with the GS tier~\cite{2022icc}. Within this architecture, the prior work~\cite{myjsac} developed a graph-based association method for a single-band GS tier. Although that method accounts for predicted UAM mobility, it selects one association that remains fixed over the considered prediction interval. In contrast, the present formulation permits the association and GS subband assignment to be updated over multiple time steps while explicitly accounting for handovers and subband changes. The resulting decisions are temporally coupled, and the subband assignments determine the interference relationships among GS-served UAMs at each step. These sequential and multi-subband interactions substantially increase the scheduling complexity and cannot be represented directly by the graph-based method in~\cite{myjsac}. We therefore develop a DRL-based framework for joint association and subband scheduling over the considered mobility horizon.

We address this joint design through a hybrid learning-and-optimization approach that reflects the mixed discrete-continuous structure of the problem. We adopt proximal policy optimization (PPO) to learn the sequential association and subband decisions~\cite{ppopaper,202009tcyber}. However, effective learning requires an architecture that represents the set-structured and geometry-dependent interactions among UAMs and communication resources. A conventional multilayer perceptron (MLP) does not explicitly preserve this set structure, while generic Transformer attention lacks the task-specific decomposition of UAM-to-UAM and UAM-to-resource relations that determines directional interference in the considered network. GeoSetPPO incorporates these relations through dedicated geometry-aware set-attention modules and a shared per-UAM decision head. Conditioned on the resulting discrete schedule, the GS power-allocation problem is solved independently at each time step using successive convex approximation (SCA)~\cite{myfsotraj}.

The main contributions of this work are summarized as follows:
\begin{itemize}

    \item We formulate a cooperative ground-satellite downlink problem for mobility-aware UAM communications with a multi-subband GS tier. Unlike~\cite{myjsac}, which maintains one association over a prediction interval and assumes a single shared GS band, the proposed formulation optimizes sequential association and subband decisions while accounting for geometry-driven interference, handovers, GS overload, and per-step power allocation.

    \item We propose GeoSetPPO, a PPO-based scheduler designed for the set-structured UAM scheduling problem. The policy employs dedicated UAM-to-UAM and UAM-to-resource attention modules that incorporate relative position, relative velocity, and resource-related information. This geometry-aware decomposition provides a problem-specific inductive bias that is absent from the considered MLP- and generic Transformer-based PPO architectures.

    \item We address the mixed discrete-continuous design by learning only the discrete association and subband decisions using DRL, while computing continuous GS transmit powers using an SCA-based per-slot power allocation module conditioned on the schedule. We further propose a two-stage training strategy that evaluates rewards with uniform power in the first stage and gradually incorporates SCA-based power allocation, improving training stability and reducing computational cost.
    
    \item We evaluate GeoSetPPO across multiple network scales and subband settings. Under the considered training setting, GeoSetPPO achieves higher returns than the MLP-PPO and Transformer-PPO baselines. Relative to the algorithm-based scheduler~\cite{myjsac} and the distance-based heuristic, it provides favorable reward performance and produces more feasible schedules in the considered multi-subband settings.
\end{itemize}

\subsection{Prior Work}

UAV communication studies broadly consider aerial platforms either as communication infrastructure, such as aerial base stations and relays, or as cellular-connected aerial users. This work belongs to the latter category and focuses on downlink service for UAM vehicles. The air-ground channels of aerial users are strongly geometry-dependent: the high probability of line-of-sight (LoS) propagation and the relatively low path-loss exponent allow both desired and interfering signals to remain significant over long distances, often producing network-wide inter-cell interference~\cite{201902wcom,201907twc,202311tvt,201908iotj}. Multi-antenna transmission, three-dimensional (3D) beamforming, and cooperative interference management have therefore been investigated, together with beam tracking to sustain their gains under mobility~\cite{201912commag,202002tcom,202005wcl}. These physical-layer characteristics motivate network-level coordination of association, spectrum usage, and transmit power.

At the network layer, this design is closely related to joint user association and radio-resource allocation in heterogeneous networks (HetNets), where discrete association and frequency-resource decisions are coupled with interference-dependent power variables, yielding mixed-integer nonconvex formulations. Prior studies have addressed joint association, subchannel, and power allocation using graph-theoretic and difference-of-convex optimization, as well as QoS-aware formulations based on tractable convex approximations and bounds~\cite{wang2017joint,sokun2017qos}. Although such optimization methods can provide high-quality solutions, repeatedly solving them becomes computationally demanding as the numbers of users and resources increase or the network state evolves. This limitation has motivated multi-agent DRL methods for long-term association, spectrum, and power allocation in large-scale HetNets~\cite{zhao2019drl,yang2022distributed}. These studies avoid directly re-solving the original mixed-integer program for every network realization, but focus on terrestrial HetNets without 3D aerial mobility, geometry-structured directional interference, or a satellite service tier.

Joint trajectory and resource allocation has also been studied in aerial and integrated networks. Optimization-based studies have jointly designed UAV placement or trajectories with association, spectrum, and power allocation~\cite{pervez2022joint,yi2024positioning}, while learning-based studies have employed multi-agent DRL and soft actor-critic methods for related joint trajectory-resource decisions~\cite{yin2024joint,202311twc}. Other DRL studies have addressed coverage continuity, spatiotemporal scheduling, and cooperative resource allocation in mobile UAV networks~\cite{202402twc,202305tcom,202003ojvt,2025eatwc,202002twc,202206tcom}. Unlike the present work, however, many of these formulations treat UAVs as controllable aerial BSs and optimize their positions or trajectories, whereas the UAMs considered here are mobile aerial users whose trajectories are externally determined. Related studies have applied attention-based PPO to satellite-assisted computation and MARL to UAM transportation operations, but not to geometry-aware joint association and multi-subband downlink scheduling~\cite{202405tmc,202308tiv}. Our closest prior work developed a graph-theoretic cooperative ground-satellite UAM scheduler~\cite{myjsac}, but considered snapshot-based operation over a single GS band rather than a learned mobility-aware multi-subband policy.

Consequently, the joint design of cooperative GS-satellite association, multi-subband scheduling, QoS-constrained power allocation, and handover-aware management of geometry-structured interference remains underexplored for many moving UAM users. Moreover, many learning-based radio-resource management methods use instantaneous or estimated CSI and link-gain observations as policy inputs, whose frequent acquisition and processing can be difficult in high-mobility air-space networks~\cite{202305tcom,2025eatwc,202009tccn}. These gaps motivate GeoSetPPO, which learns per-UAM association and subband decisions from positions, velocities, and geometry-derived relations through a permutation-invariant set-attention policy, while an SCA module computes continuous GS transmit powers for the selected schedule.

\subsection{Notation and Paper Organization}

Boldface lowercase letters $\bold{x}$ denote vectors, and boldface uppercase letters $\bold{X}$ denote matrices. The operators $(\cdot)^*$, $(\cdot)^\text{T}$, and $(\cdot)^H$ denote complex conjugate, transpose, and Hermitian transpose, and $\otimes$ denotes the Kronecker product. The notation $x\in[a,b]$ indicates that a real number $x$ lies in the closed interval $[a,b]$. The operators $|\cdot|$ and $\lVert\cdot\rVert$ denote absolute value and Euclidean norm, and $\max\{\cdot\}$ denotes the maximum of its arguments. The expectation and variance operators are denoted by $\mathbb{E}[\cdot]$ and $\mathrm{Var}(\cdot)$. The notation $\mathcal{CN}(\mu,\sigma^2)$ denotes the complex normal distribution with mean $\mu$ and variance $\sigma^2$. For simplicity, $\log(\cdot)$ denotes $\log_2(\cdot)$. The function $\mathrm{clip}(x,a,b)$ clips a scalar $x$ to the interval $[a,b]$. Finally, $[\bold{x};\bold{y}]$ denotes vector concatenation.

The remainder of this paper is organized as follows. Section~\ref{sec_sys} presents the cooperative ground-satellite system model, including the geometry, mobility, and link models. Section~\ref{sec_prob} formulates the sequential scheduling and GS power-allocation problems. Section~\ref{sec_proposed} develops the SCA-based power allocator, GeoSetPPO architecture, and two-stage training method. Section~\ref{sec_simres} presents the simulation results, and Section~\ref{sec_conc} concludes the paper.

\section{System Model}
\label{sec_sys}

We investigate a downlink scenario where $K$ GSs and a single satellite jointly serve $M$ UAMs, as illustrated in Fig.~\ref{f01}. The design objective is to maximize the network sum rate while penalizing excessive handovers and overly unbalanced user association across GSs. We refer to the GSs and the satellite collectively as base stations (BSs). The GS tier operates over multiple subbands, and the central task at each time step is to determine link association, subband assignment, and GS transmit powers for a given spatial configuration of GSs and UAMs. In our setting, association decisions are driven by beam directions and 3D geometry (\eg, positions and velocities), together with network-level constraints such as per-BS user capacity.

We also incorporate a satellite tier operating on a dedicated frequency band and serving a subset of UAMs under a centralized controller that performs joint scheduling across the GS and satellite tiers. We consider a low Earth orbit (LEO) satellite forming a quasi-stationary spot beam that covers the target airspace. Since the satellite footprint typically spans several tens of kilometers, we do not consider per-UAM satellite beamforming. Instead, satellite-served users are multiplexed via time-division multiple access (TDMA) or frequency-division multiple access (FDMA). The satellite tier therefore provides an additional orthogonal-band downlink option that complements the GS tier depending on the instantaneous geometry and interference conditions. We assume perfect time and frequency synchronization between each connected BS-UAM pair via robust signaling. The associated synchronization signaling overhead incurred when a UAM attaches to a new BS is captured by the handover penalty.

\begin{figure}[t]
	\begin{center}
		{\includegraphics[width=0.85\columnwidth,keepaspectratio]
			{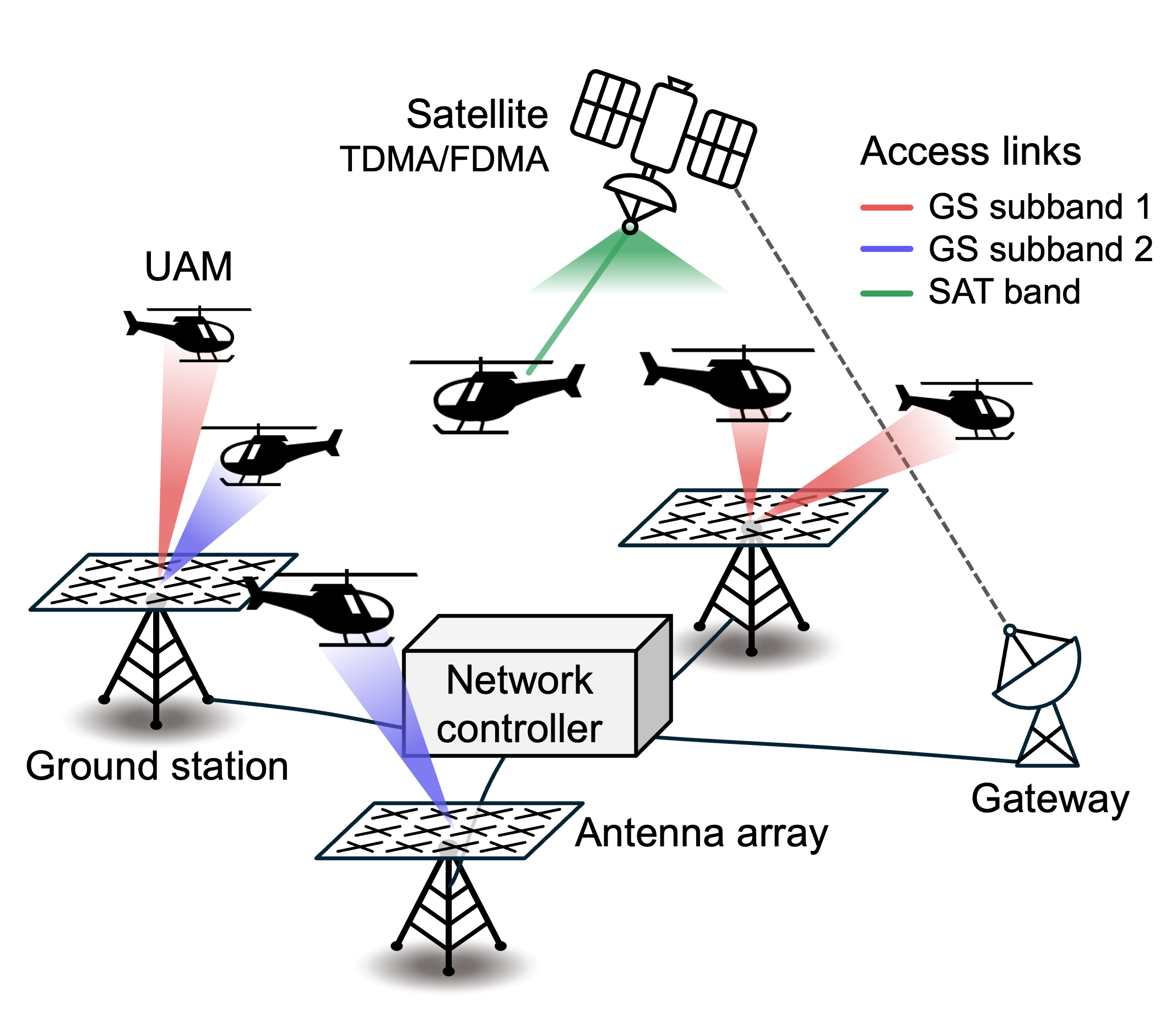}%
			\caption{System model of a space-air-ground integrated UAM network.}
			\label{f01}
		}
	\end{center}
	\vspace{-10pt}
\end{figure}

\subsection{Geometry-Related Parameters}

Each GS is equipped with an upward-facing rectangular antenna array of dimension $N_x \times N_y$, where $N_x$ and $N_y$ denote the numbers of elements along the horizontal and vertical axes, respectively. The associated unit vectors $\hat{\bold{i}}^{(k)},\,\hat{\bold{j}}^{(k)}\in\mathbb{R}^3$ describe the local array orientation of the $k$-th GS array (the x- and y-axis directions). To provide sufficiently narrow beams for UAMs at distances of several hundred meters or more, both $N_x$ and $N_y$ are chosen to be moderately large (\eg, $N_x,N_y\geq 4$). Each antenna element is connected to its own RF chain, enabling fully digital beamforming and simultaneous service of multiple UAMs per GS. To avoid excessive complexity and backhaul congestion, however, the DRL scheduler imposes a penalty whenever the number of UAMs associated with a GS exceeds a threshold $\mathcal{N}_\text{GS}$. After link association, we denote the number of GS-served UAMs by $M_\text{GS}$ and the number of satellite-served UAMs by $M_\text{SAT}$, with $M_\text{GS} + M_\text{SAT} = M$. Each UAM is assumed to carry two antennas, a downward-facing antenna for GS links and a directional upward antenna for satellite connectivity.

We model each mobility episode over $T_\text{s}$ discrete time steps indexed by $t\in\{0,\cdots,T_\text{s}-1\}$, with sampling interval $\tau_\text{s}$ between consecutive steps. The satellite position is defined as
\begin{equation}
\label{satloc_new}
\bold{u}^\text{S}[t] = \big[x^\text{S}[t],\,y^\text{S}[t],\,z^\text{S}\big]^\text{T},
\end{equation}
where $z^\text{S}$ is the orbital altitude. The coordinates of the $k$-th GS are
\begin{equation}
\label{gsloc_new}
\bold{u}_k^\text{G} = \big[x_k^\text{G},\,y_k^\text{G},\,z_k^\text{G}\big]^\text{T},
\end{equation}
for $k\in\{1,\cdots,K\}$. The trajectory of the $m$-th UAM is described by its position and velocity:
\begin{equation}
\label{uamloc_new}
\begin{aligned}
\bold{u}_m[t] &= \big[x_m^{\text{U}}[t],\,y_m^{\text{U}}[t],\,z_m^{\text{U}}[t]\big]^\text{T},\\
\dot{\bold{u}}_m[t] &= \big[\dot{x}_m^{\text{U}}[t],\,\dot{y}_m^{\text{U}}[t],\,\dot{z}_m^{\text{U}}[t]\big]^\text{T},
\end{aligned}
\end{equation}
with $m\in\{1,\cdots,M\}$ indicating the UAM index. In the simulations, we generate smooth UAM trajectories by controlling the acceleration profiles. For any time index $t$ when scheduling is performed, we assume that the network controller knows $\bold{u}_k^\text{G}$ for all GSs and $\bold{u}_m[t],\dot{\bold{u}}_m[t]$ for all UAMs. Specifically, the displacement of UAM $m$ from GS $k$ is
\begin{equation}
\label{dvec_new}
\bold{d}_m^{(k)}[t] = \bold{u}_m[t] - \bold{u}_k^\text{G}.
\end{equation}
The cosines between $\bold{d}_m^{(k)}[t]$ and the array axes $\hat{\bold{i}}^{(k)}$ and $\hat{\bold{j}}^{(k)}$ are represented as
\begin{equation}
\label{anglealphabeta_new}
\begin{aligned}
\cos\alpha_{m}^{(k)}[t]&=\frac{\bold{d}_m^{(k)}[t]^\text{T}\hat{\bold{i}}^{(k)}}{\lVert\bold{d}_m^{(k)}[t]\rVert},\\
\cos\beta_{m}^{(k)}[t]&=\frac{\bold{d}_m^{(k)}[t]^\text{T}\hat{\bold{j}}^{(k)}}{\lVert\bold{d}_m^{(k)}[t]\rVert}.
\end{aligned}
\end{equation}
Thus, the angles $\alpha_{m}^{(k)}[t]$ and $\beta_{m}^{(k)}[t]$ define the link direction with respect to GS $k$.

\subsection{GS-to-UAM Signal Models}

To model inter-beam and inter-cell interference under simultaneous GS transmissions, we relabel the UAM indices after link scheduling. While the UAMs were indexed by $m$ in the previous geometry description, after scheduling we index the GS-served UAMs by a triple $(k,b,p)$, or simply $kbp$, where $k\in\{1,\cdots,K\}$ is the GS index, $b\in\{1,\cdots,B\}$ is the GS subband index, and $p\in\{1,\cdots,P_{kb}\}$ is the local UAM index associated with GS $k$ on subband $b$. Here, $P_{kb}$ denotes the number of UAMs scheduled by GS $k$ on subband $b$, so that $\sum_{k=1}^{K}\sum_{b=1}^{B} P_{kb} = M_\text{GS}$. By construction, the UAM labeled by $kbp$ is served by GS $k$ on subband $b$. Let $W_\mathrm{G}$ denote the total bandwidth of the GS tier. We assume that $W_\mathrm{G}$ is equally partitioned into $B$ orthogonal subbands, so that each subband has bandwidth $W_\mathrm{G}/B$.

We now describe the baseband-equivalent signal and channel models for the downlink GS-to-UAM transmission. For notational simplicity, although most quantities depend on the time-varying locations of the UAMs, we omit the explicit time index $[t]$ in this subsection. Let $x_{kbp}\in\mathbb{C}$ be the data symbol for UAM $kbp$, and let $\rho_{kbp}\in\mathbb{R}$ be the transmit power allocated to that stream. The 3D beamforming vector at GS $k$ for UAM $kbp$ is denoted by $\bold{v}_{kbp}^{(k)}\in\mathbb{C}^{N_x N_y}$. Then, the transmit signal vector of GS $k$ on subband $b$ is modeled as
\begin{equation}
\label{basetx}
\bold{y}_{kb} = \sum_{p=1}^{P_{kb}}\bold{v}_{kbp}^{(k)} \sqrt{\rho_{kbp}}\, x_{kbp}.
\end{equation}
To construct $\bold{v}_{kbp}^{(k)}$, we first define the Vandermonde vector as
\begin{equation}
\label{arrresp}
\bold{a}_N(x) = \big[1,\,e^{-j\pi x},\,\cdots,\,e^{-j\pi (N-1)x}\big]^\text{T}.
\end{equation}
We assume that the GS-to-UAM links are LoS-dominant. Using the angles between the $k$-th GS array and the link direction, $(\alpha_{kbp}^{(k)},\beta_{kbp}^{(k)})$ from~\eqref{anglealphabeta_new}, a normalized 3D maximum ratio transmission (MRT) beamforming vector is designed as
\begin{equation}
\label{beamvec}
\bold{v}_{kbp}^{(k)}
= \frac{1}{\sqrt{N_x N_y}}
\big\{\bold{a}_{N_x}(\cos\alpha_{kbp}^{(k)}) \otimes
      \bold{a}_{N_y}(\cos\beta_{kbp}^{(k)})\big\}^*.
\end{equation}
Let $L_{kbp}^{(\ell)}\in\mathbb{R}$ denote the free-space pathloss (FSPL) between GS $\ell$ and UAM $kbp$. Let $N_0$ denote the thermal-noise power spectral density. Since each GS subband has bandwidth $W_\mathrm{G}/B$, the
receiver-noise power on one subband is
\begin{equation}
\label{gsnoisepower}
\sigma_{\mathrm{n},B}^{2}
=
N_0\frac{W_\mathrm{G}}{B}.
\end{equation}
Accordingly, let $z_{kbp}\sim\mathcal{CN}(0,\sigma_{\mathrm{n},B}^{2})$ denote the receiver noise, and let $g_{kbp}^{\ell bq}\in\mathbb{C}$ denote the normalized beamforming gain (\ie, the inner product between the channel vector and the beamforming vector) observed at UAM $kbp$ on subband $b$ when GS $\ell$ transmits using the beam intended for UAM $\ell bq$. The received baseband signal at UAM $kbp$ on subband $b$ is then
\begin{equation} 
\label{rxsignal0}
\begin{aligned}
r_{kbp}
&= \sqrt{L_{kbp}^{(k)}}\, g_{kbp}^{kbp}\, \sqrt{\rho_{kbp}}\, x_{kbp}\\
&\quad + \sum_{\ell\neq k}^{K} \sum_{q=1}^{P_{\ell b}}
\sqrt{L_{kbp}^{(\ell)}}\, g_{kbp}^{\ell bq}\, \sqrt{\rho_{\ell bq}}\, x_{\ell bq}\\
&\quad + \sum_{q\neq p}^{P_{kb}}
\sqrt{L_{kbp}^{(k)}}\, g_{kbp}^{kbq}\, \sqrt{\rho_{kbq}}\, x_{kbq}\,+ z_{kbp},
\end{aligned}
\end{equation}
where the three terms correspond to the desired signal, inter-GS interference, and intra-GS interference, respectively.
Assuming isotropic GS-link antenna elements and neglecting backlobe radiation, let $\lambda$ denote the carrier wavelength and let $\bold{d}_{kbp}^{(\ell)}$ denote the displacement vector from GS $\ell$ to UAM $kbp$. The resulting FSPL is
\begin{equation}
\label{fspl}
L_{kbp}^{(\ell)} = \bigg(\frac{\lambda}{4\pi \lVert \bold{d}_{kbp}^{(\ell)} \rVert}\bigg)^2.
\end{equation}
We use the geometry-derived LoS array response to design the GS beams and construct the scheduling inputs. The GeoSetPPO actor therefore does not observe instantaneous small-scale fading realizations. During training and evaluation, however, the link gains used to compute the scheduling reward follow a LoS-dominant Rician model with $K_{\mathrm R}=20$~dB unless otherwise stated. Section~\ref{sec_simres} further evaluates policies trained at this nominal condition under different test-channel Rician factors. Accordingly, the deterministic LoS component is modeled as
\begin{equation}
\label{channelvec}
\bold{h}_{kbp}^{(\ell)} = \bold{a}_{N_x}(\cos\alpha_{kbp}^{(\ell)}) \otimes
\bold{a}_{N_y}(\cos\beta_{kbp}^{(\ell)}).
\end{equation}
The corresponding beamforming gain when GS $\ell$ uses the beam for UAM $\ell bq$ and UAM $kbp$ receives is
\begin{equation}
\label{bfgain}
\begin{aligned}
g_{kbp}^{\ell bq}
&= \bold{h}_{kbp}^{(\ell)\,\text{T}} \bold{v}_{\ell bq}^{(\ell)} \\
&= \frac{1}{\sqrt{N_x N_y}}
\bold{a}_{N_x}(\cos\alpha_{kbp}^{(\ell)})^\text{H}
\bold{a}_{N_x}(\cos\alpha_{\ell bq}^{(\ell)}) \\
&\quad\cdot
\bold{a}_{N_y}(\cos\beta_{kbp}^{(\ell)})^\text{H}
\bold{a}_{N_y}(\cos\beta_{\ell bq}^{(\ell)}).
\end{aligned}
\end{equation}
Thus, $g_{kbp}^{kbp}$ captures the main-lobe gain for the intended UAM, while $g_{kbp}^{\ell bq}$ with $\ell\neq k$ or $q\neq p$ represents inter-beam interference.

Both $L_{kbp}^{(\ell)}$ and $g_{kbp}^{\ell bq}$ are determined by the geometry and the array configurations and can be precomputed. We therefore define the combined large-scale channel coefficient
\begin{equation}
\label{winter}
w_{kbp}^{\ell bq} = L_{kbp}^{(\ell)} \big|g_{kbp}^{\ell bq}\big|^2.
\end{equation}
In particular, we denote the desired-link coefficient by $w_{kbp} \triangleq w_{kbp}^{kbp}$. Using the channel coefficients $w_{kbp}^{\ell bq}$ and the GS transmit power $\rho_{kbp}$, the signal-to-interference-plus-noise ratio (SINR) at UAM $kbp$ on subband $b$ can be expressed as
\begin{equation}
\label{sinr0}
\gamma_{kbp} =
\frac{w_{kbp}\,\rho_{kbp}}{
\displaystyle \sum_{\ell\neq k}^{K} \sum_{q=1}^{P_{\ell b}} w_{kbp}^{\ell bq}\, \rho_{\ell bq}
+ \sum_{q\neq p}^{P_{kb}} w_{kbp}^{kbq}\, \rho_{kbq}
+ \sigma_{\mathrm{n},B}^{2} }.
\end{equation}
The corresponding achievable spectral efficiency is modeled as follows:
\begin{equation}
\label{capacity0}
C_{kbp} = \log\big(1+\gamma_{kbp}\big).
\end{equation}
Since each GS subband has bandwidth $W_\mathrm{G}/B$, the achievable data rate of UAM $kbp$ is
\begin{equation}
\label{gslinkrate}
R_{kbp}
=
\frac{W_\mathrm{G}}{B}C_{kbp}.
\end{equation}
Accordingly, when the aggregate GS-tier rate is normalized by the fixed total GS bandwidth $W_\mathrm{G}$, it is expressed as
\begin{equation}
\label{gsnormalizedrate}
\overline{R}_\mathrm{G}
=
\frac{1}{B}
\sum_{k=1}^{K}
\sum_{b=1}^{B}
\sum_{p=1}^{P_{kb}}
C_{kbp}.
\end{equation}
Although small-scale fading makes the instantaneous downlink SINR random, the scheduling inputs and SCA-based power allocation are computed using the geometry-derived LoS channel gains. During training and evaluation, the rate expression in~\eqref{capacity0} is evaluated using Rician channel gains only for reward calculation, without exposing the fading realizations to either GeoSetPPO or the SCA module. This is motivated by the fact that, in aerial links, the dominant interference structure is primarily governed by slowly varying geometry and beam directions, whereas instantaneous CSI is difficult to acquire and exploit in high-mobility scenarios. For brevity, we omit the explicit discrete time index $[t]$ in this subsection, however, most parameters defined in this section are functions of $t$ and are used in the subsequent analysis.

\subsection{Assumptions for Satellite-to-UAM Transmission}

After link scheduling, $M_\text{SAT}$ UAMs are assigned to the satellite, which serves them in a separate frequency band from the GS tier. Let $W_\mathrm{S}$ denote the satellite-tier bandwidth. For simplicity, we assume $W_\mathrm{S}=W_\mathrm{G}$ throughout this work. Due to the high altitude of the satellite, the spatial channels of different satellite-to-UAM links are highly correlated. We therefore assume that the satellite downlink operates in a TDMA or FDMA mode. Let $s\in\{1,\cdots,M_\text{SAT}\}$ denote the index of a satellite-served UAM. The corresponding satellite-to-UAM channel gain is modeled as
\begin{equation}
\label{fsplsat_new}
\mathcal{G}_s^\text{S}
= G_\text{S}^\text{R} G_\text{S}^\text{T}(\mu_{s})
\left(\frac{\lambda_\text{S}}{4\pi d_s^\text{S}}\right)^2
\mathcal{H}_{s}^\text{S}\mathcal{L}_\text{S},
\end{equation}
where $G_\text{S}^\text{R}$ and $G_\text{S}^\text{T}(\mu_{s})$ are the receive and transmit antenna gains, $d_s^\text{S}$ is the satellite-to-UAM distance, $\lambda_\text{S}$ is the satellite carrier wavelength, $\mathcal{H}_{s}^\text{S}$ collects propagation loss factors (\eg, gaseous, cloud, and rain attenuation), and $\mathcal{L}_\text{S}$ represents other losses such as antenna losses~\cite{myjsac}.

A widely used K-band tapered-aperture antenna model~\cite{1991jsac} characterizes the satellite transmit gain. Let $J_1(\cdot)$ and $J_3(\cdot)$ denote the first- and third-order Bessel functions, respectively. The transmit gain is given by
\begin{equation}
\label{satbeam_new}
G_\text{S}^\text{T}(\mu_{s})
= G_0\left[\frac{J_1(\mu_{s})}{2\mu_{s}}
+ 36\frac{J_3(\mu_{s})}{\mu_{s}^3}\right]^2.
\end{equation}
Here, the argument $\mu_{s}$ is given by $\mu_{s} = 2.07123\sin(\psi_{s})/\sin(\psi_\text{3dB})$, where $\psi_{s}$ is the off-boresight angle and $\psi_\text{3dB}=0.39{\pi\lambda_\text{S}}/{a}$ is the 3-dB beamwidth for an aperture diameter $a$~\cite{202202twcbeam}. The boresight gain is given by $G_0 = \frac{4\pi A\eta}{\lambda_\text{S}^2}$, where $A$ is the effective aperture and $\eta$ is the antenna efficiency.
    
Additional attenuation factors such as atmospheric gases, clouds, and rain can be included using ITU-R recommendations~\cite{itu676,itu840,itu838}, and are combined into $\mathcal{H}_{s}^\text{S}$. The analysis in~\cite{myjsac} showed that the overall expected satellite-to-UAM channel gains for different UAMs in the target airspace differ by less than $0.5$~dB, even under distinct weather conditions. Motivated by this, we assume that all satellite-served UAMs experience approximately the same downlink channel gain. Hence, the satellite downlink signal-to-noise ratio (SNR) can be approximated by the common expression
\begin{equation}
\label{satsinr}
\gamma^\text{SAT}
= \frac{G_\text{S}^\text{R} G_\text{S}^\text{T}(0)
\left(\frac{\lambda_\text{S}}{4\pi z^\text{S}}\right)^2
\mathcal{H}^\text{S}\mathcal{L}_\text{S}\,\rho_{\text{S}}}{\sigma_\text{S}^2},
\end{equation}
where $z^\text{S}$ is the satellite altitude, $\rho_{\text{S}}$ is the satellite transmit power, $\mathcal{H}^\text{S}$ is the common propagation-loss factor, and $\sigma_\text{S}^2$ is the noise power at the UAM receiver. The corresponding achievable spectral efficiency on the satellite link is
\begin{equation}
\label{satcapacity}
C_\text{SAT} = \log\big(1+\gamma^\text{SAT}\big).
\end{equation}
Since a time- or frequency-division multiple access scheme is used, the $M_\text{SAT}$ UAMs orthogonally share this spectral efficiency. Let $C_s^\text{SAT}$ denote the effective satellite downlink spectral efficiency of the $s$-th satellite-served UAM. The sum of the individual spectral efficiencies then satisfies
\begin{equation}
\label{satsumrate}
\sum_{s=1}^{M_\text{SAT}} C_s^\text{SAT} = C_\text{SAT}.
\end{equation}
Thus, assigning additional UAMs to the satellite redistributes the fixed satellite-tier spectral efficiency among its users.

\section{Problem Formulation}
\label{sec_prob}

Our objective is to design a joint link-association, subband-allocation, and GS power-allocation strategy that maximizes the cumulative sum rate over the considered horizon of all UAMs. These decisions are made by a centralized network controller using only the current UAM positions and velocities, without requiring instantaneous CSI feedback from UAMs to the BSs. At each decision step $t$, the controller selects the association, subband assignment, and transmit powers that remain fixed over the actual time interval $\tau\in[t\tau_s,(t+1)\tau_s)$. The index $t$ is also used as the time step in the DRL formulation. At each decision step, we decompose the overall problem into two coupled components:
\begin{itemize}
    \item \textbf{Scheduling:} BS-UAM association (GS tier or satellite tier) and subband selection for GS-UAM links.
    \item \textbf{GS power allocation:} transmit-power optimization for the scheduled GS-UAM links subject to per-GS power budgets and per-link minimum-SINR constraints.
\end{itemize}
In this section, we first formulate the scheduling problem and then describe its coupling with the GS power allocation subproblem.

\subsection{Scheduling Problem}
\label{schedprob}

At each decision step $t$, the network controller allocates all $M$ UAMs to the available space-frequency resources. Each UAM can either be associated with one of the $K$ GSs on one of the $B$ subbands, or be served by the satellite. Therefore, each UAM has $KB+1$ association options. The scheduling objective is to maximize the sum rate over the entire horizon $t \in \{0,\cdots,T_\text{s}-1\}$, given that the GS powers are optimized by a separate power allocation algorithm, while at the same time reducing handover events and avoiding overloading any single GS.

Let $f_t^\text{sched}$ denote the scheduling function at time step $t$, which maps the original UAM index $m\in\{1,\cdots,M\}$ to either a GS-subband pair or the satellite:
\begin{equation}
\label{schedfunc}
\begin{aligned}
f_t^\text{sched} : \{1,\cdots,M\}
&\to \{(\text{GS}1,\text{band}1),(\text{GS}1,\text{band}2),\cdots,\\
&\qquad (\text{GS}K,\text{band}B),\text{SAT}\}.
\end{aligned}
\end{equation}
According to this scheduling function, each UAM index $m$ is mapped to either $kbp$ (GS-served UAM index) or $s$ (satellite-served UAM index), as described in Section~\ref{sec_sys}. In the following formulations, indices of the form $kbp$ or $s$ refer to specific UAMs.

Using the rate expressions in~\eqref{capacity0} and~\eqref{satcapacity}, the finite horizon scheduling problem is formulated as
\begin{subequations}
\label{problem1}
\begin{align}
\text{(P1)}&:\notag\\
\begin{split}
\label{po1}
\underset{\{f_t^\text{sched}\}}{\maxdot}&\;\;
\sum_{t=0}^{T_\text{s}-1}
\Bigg\{
\frac{1}{B}
\sum_{k=1}^{K}\sum_{b=1}^{B}\sum_{p=1}^{P_{kb}[t]}
C_{kbp}[t] \\
&\qquad\qquad\qquad
+\mathcal{R}_t^\text{sat}
-\mathcal{P}_t^\text{ho}
-\mathcal{P}_t^\text{gs}
\Bigg\}
\end{split}
\\
\begin{split}
\label{pc11}
\textrm{s.t.}&\quad
\mathcal{R}_t^\text{sat}
=\begin{cases}
0, & M_\text{SAT}[t]=0\\
C_\text{SAT}, & M_\text{SAT}[t]\geq 1
\end{cases}
\end{split}
\\
\begin{split}
\label{pc12}
&\quad\mathcal{P}_t^\text{ho}
= c^\text{B}\mathcal{N}_t^\text{B}
+ c^\text{G}\mathcal{N}_t^\text{G}
+ c^\text{S}\mathcal{N}_t^\text{S}
\end{split}
\\
\begin{split}
\label{pc13}
&\quad\mathcal{P}_t^\text{gs}= c_\text{gs}
\sum_{k=1}^{K}
\max\big\{M_\text{GS}^{(k)}[t]-\mathcal{N}_\text{GS},\,0\big\}
\end{split}
\\
\begin{split}
\label{pc14}
&\quad\forall(k,b,p,t):\\
&\quad\rho_{kbp}[t]\text{ are determined by solving~(P2)}
\end{split}
\end{align}
\end{subequations}
In~(P1), the first term in~\eqref{po1} is the aggregate GS-tier rate normalized by the fixed total GS bandwidth $W_\mathrm{G}$, where $C_{kbp}[t]$ is computed from~\eqref{capacity0}. For a given scheduling decision, the remaining degrees of freedom in $C_{kbp}[t]$ are the transmit powers $\rho_{kbp}[t]$, which are obtained by solving the GS power allocation subproblem~(P2), as indicated in~\eqref{pc14} and detailed in Section~\ref{gspowerprob}. The term $\mathcal{R}_t^\text{sat}$ denotes the satellite-tier throughput. Constraint~\eqref{pc11} specifies that the satellite tier provides zero throughput when no UAM is assigned to the satellite, and a fixed sum throughput $C_\text{SAT}$ otherwise, consistent with~\eqref{satsumrate}. For simplicity, $C_\text{SAT}$ is treated as time-invariant. The handover penalty $\mathcal{P}_t^\text{ho}$ in~\eqref{pc12} assigns different costs through the coefficients $c^\text{B}$, $c^\text{G}$, and $c^\text{S}$ to three types of switching events: $\mathcal{N}_t^\text{B}$ counts subband changes for UAMs that remain associated with the same GS, $\mathcal{N}_t^\text{G}$ counts GS-to-GS handovers between consecutive slots, and $\mathcal{N}_t^\text{S}$ counts switching events between the GS tier and the satellite tier. Finally, $\mathcal{P}_t^\text{gs}$ in~\eqref{pc13} discourages GS overload: if the number of UAMs associated with GS $k$ at time $t$, denoted by $M_\text{GS}^{(k)}[t]$, exceeds the threshold $\mathcal{N}_\text{GS}$, a penalty proportional to $c_\text{gs}$ is imposed.

Problem (P1) is a mixed-integer program with a dual-loop structure. The outer loop selects the discrete association and subband decisions, while the inner loop optimizes the continuous GS transmit powers through (P2) at each time step. Each UAM selects one of $KB$ GS-subband pairs or the satellite tier, resulting in $(KB+1)^M$ possible joint scheduling decisions per step. Directly enumerating the corresponding decision sequences over $T_\mathrm{s}$ steps would require considering $(KB+1)^{MT_\mathrm{s}}$ possibilities. Moreover, the subband assignments determine the same-band interference relationships within each step, while the handover penalties couple successive decisions. Compared with~\cite{myjsac}, which maintains one association over a prediction interval and assumes a single shared GS band, these sequential and multi-subband couplings prevent direct application of the graph-based association procedure and make exhaustive optimization impractical.

\subsection{GS Power Allocation Problem}
\label{gspowerprob}

Given a scheduling decision $f_t^\text{sched}$ at time step $t$, suppose that all $M$ UAMs have been assigned either to a GS-subband pair or to the satellite. Accordingly, GS-served UAMs are indexed by $(k,b,p)$ and the satellite-served UAMs by $s$. The remaining task is to allocate transmit powers to the scheduled GS-UAM links. Since the satellite operates on a separate band with a fixed sum throughput $C_\text{SAT}$, the power allocation affects only the sum rate of GS-served UAMs. For a fixed $t$ (and omitting the explicit time index for notational simplicity), the GS power allocation problem is formulated as
\begin{subequations}
\label{problem2}
\begin{align}
\text{(P2)}:\;\;
\begin{split}
\label{po2}
\underset{\boldsymbol{\rho}}{\mindot}\quad
& -\sum_{k=1}^{K} \sum_{b=1}^{B} \sum_{p=1}^{P_{kb}} C_{kbp}
\end{split}
\\
\begin{split}
\label{pc21}
\textrm{s.t.}\quad
& \sum_{b=1}^{B} \sum_{p=1}^{P_{kb}} \rho_{kbp} \leq \rho_\text{tot},\quad \forall k
\end{split}
\\
\begin{split}
\label{pc22}
& \rho_{kbp} \geq 0,\quad \forall (k,b,p)
\end{split}
\\
\begin{split}
\label{pc23}
& \gamma_{kbp} \geq \gamma_\text{min},\quad \forall (k,b,p)
\end{split}
\end{align}
\end{subequations}
For a fixed value of $B$, the factor $1/B$ is constant with respect to the GS transmit powers. Therefore, it is omitted from the power-allocation subproblem without affecting its optimizer.
Here, $\boldsymbol{\rho}=[\rho_{111},\cdots,\rho_{KBP_{KB}}]^\text{T}$ denotes the collection of GS transmit powers, $\rho_\text{tot}$ is the per-GS sum-power budget, and $\gamma_\text{min}$ is the minimum target SINR for each GS-served UAM. Constraints~\eqref{pc21}-\eqref{pc23} enforce, respectively, the per-GS power budget, non-negativity of transmit powers, and per-link minimum SINR requirements.Thus, whenever (P2) is feasible, the minimum-SINR constraint enforces a link-level QoS requirement for every GS-served UAM.

\begin{figure}[t]
	\begin{center}
		{\includegraphics[width=0.95\columnwidth,keepaspectratio]
			{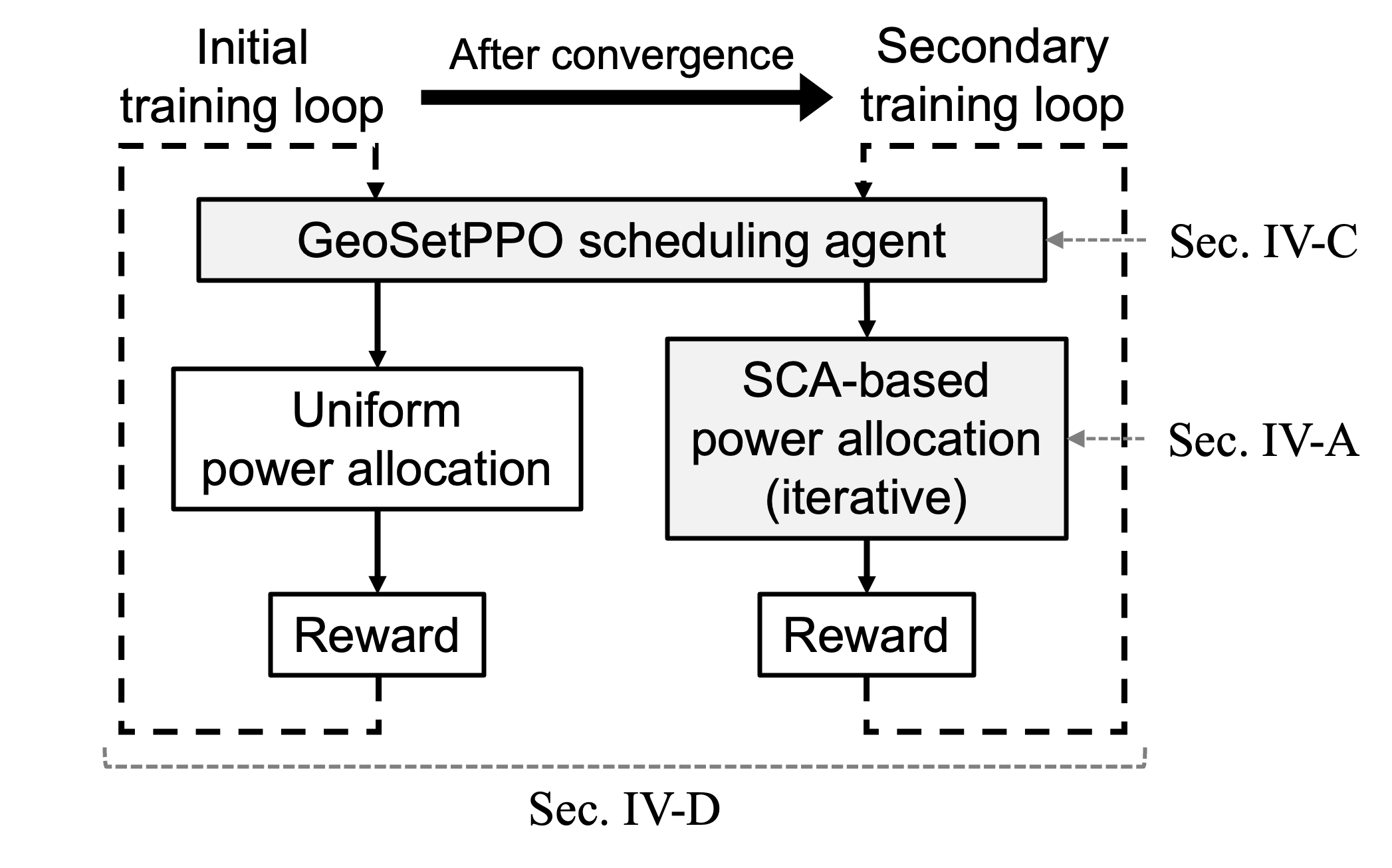}%
			\caption{Proposed two-stage PPO training strategy. The first stage evaluates rewards using uniform GS power allocation, and the second stage evaluates rewards using the SCA-based power allocator.}
			\label{f02}
		}
	\end{center}
	\vspace{-10pt}
\end{figure}

\section{DRL-Based UAM Scheduling Framework}
\label{sec_proposed}

To solve the scheduling problem (P1), we propose a geometry-aware set-attention PPO (GeoSetPPO)-based DRL framework. The agent learns a scheduling policy by maximizing a reward that combines the sum-rate term with the handover and GS-overload penalties. Because the scheduling objective in~\eqref{po1} depends on the transmit powers through~\eqref{pc14}, computing the reward at each time step requires a specific power-allocation procedure. We therefore treat GS power allocation as an inner-loop module: given a schedule, the GS transmit powers are obtained by an SCA-based solver, and the resulting rates $C_{kbp}[t]$ are used to evaluate the reward and update the PPO policy in the outer loop.

In this section, we first develop an SCA-based nonlinear programming method to solve (P2), which provides a practical power-allocation module for the multi-GS, multi-subband setting. We then present the PPO training procedure and the proposed GeoSetPPO actor-critic architecture, which employs set attention to capture UAM-UAM and UAM-GS interactions. Finally, as illustrated in Fig.~\ref{f02}, we introduce a two-stage training strategy that gradually transitions reward evaluation from uniform to SCA-based power allocation, enabling stable and effective learning for (P1).

\subsection{Nonlinear Programming-Based Power Allocation Module}

For a given schedule, (P2) maximizes the GS-tier sum rate subject to per-GS power and per-link minimum SINR constraints. While the constraints in~\eqref{pc21}-\eqref{pc23} are convex, the objective in~\eqref{po2} is non-convex. We therefore adopt SCA, a widely used iterative nonlinear programming technique~\cite{myfsotraj,myjsac}. The key idea is to replace the non-convex objective with a convex upper bound that is tight at the current iterate. At SCA iteration $\delta$, given the current power vector $\boldsymbol{\rho}^{(\delta)}$, we construct a convex surrogate for $-C_{kbp}$ and minimize the sum of these surrogates. The resulting subproblem is convex and can be solved efficiently, and its optimizer is used as the expansion point for the next iteration.

\newtheorem{lemma}{Lemma}
\begin{lemma}
\label{lemma1}
For each scheduled link $(k,b,p)$, the non-convex function $-C_{kbp}$  can be upper-bounded by the following convex surrogate at iteration $(\delta+1)$:
\begin{equation}
\label{approx01}
\begin{aligned}
- C_{kbp}^{(\delta+1)}
&= -\theta_{kbp}^{(\delta)} \log\big(w_{kbp} \rho_{kbp}\big) \\
+\frac{\theta_{kbp}^{(\delta)}}{\mu_{kbp}^{(\delta)}}&\bigg(
\sum_{\ell\neq k}^{K} \sum_{q=1}^{P_{\ell b}} w_{kbp}^{\ell bq}\rho_{\ell bq}
+ \sum_{q\neq p}^{P_{kb}} w_{kbp}^{kbq}\rho_{kbq}
\bigg)
- \zeta_{kbp}^{(\delta)},
\end{aligned}
\end{equation}
where the parameters $\theta_{kbp}^{(\delta)}$, $\mu_{kbp}^{(\delta)}$, and $\zeta_{kbp}^{(\delta)}$ are defined as
\begin{equation}
\label{thetat}
\theta_{kbp}^{(\delta)} =
\frac{w_{kbp} \rho_{kbp}^{(\delta)}}{\sum_{\ell=1}^{K} \sum_{q=1}^{P_{\ell b}} w_{kbp}^{\ell bq} \rho_{\ell bq}^{(\delta)} + \sigma_{\mathrm{n},B}^{2}},
\end{equation}
\begin{equation}
\label{mut}
\mu_{kbp}^{(\delta)} =
\sum_{\ell\neq k}^{K} \sum_{q=1}^{P_{\ell b}} w_{kbp}^{\ell bq} \rho_{\ell bq}^{(\delta)}
+ \sum_{q\neq p}^{P_{kb}} w_{kbp}^{kbq} \rho_{kbq}^{(\delta)}
+ \sigma_{\mathrm{n},B}^{2},
\end{equation}
\begin{equation}
\label{zetat}
\begin{aligned}
\zeta_{kbp}^{(\delta)}
&= \log \bigg(1 + \frac{w_{kbp} \rho_{kbp}^{(\delta)}}{\mu_{kbp}^{(\delta)}}\bigg)
- \theta_{kbp}^{(\delta)} \log\big(w_{kbp} \rho_{kbp}^{(\delta)}\big) \\
&\quad + \frac{\theta_{kbp}^{(\delta)}}{\mu_{kbp}^{(\delta)}}
\bigg(
\sum_{\ell\neq k}^{K} \sum_{q=1}^{P_{\ell b}} w_{kbp}^{\ell bq}\rho_{\ell bq}^{(\delta)}
+ \sum_{q\neq p}^{P_{kb}} w_{kbp}^{kbq}\rho_{kbq}^{(\delta)}
\bigg).
\end{aligned}
\end{equation}
Moreover, the bound in~\eqref{approx01} is tight at $\boldsymbol{\rho}=\boldsymbol{\rho}^{(\delta)}$.
\end{lemma}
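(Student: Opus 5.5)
The plan is to chain two elementary inequalities, one from convexity and one from concavity, and then collect constants. Throughout, write $I_{kbp}=\sum_{\ell\neq k}\sum_{q} w_{kbp}^{\ell bq}\rho_{\ell bq}+\sum_{q\neq p} w_{kbp}^{kbq}\rho_{kbq}$, so that $\gamma_{kbp}=w_{kbp}\rho_{kbp}/(I_{kbp}+\sigma_{\mathrm{n},B}^2)$ by~\eqref{sinr0}. The bound is carried out with the natural logarithm; since $\log_2=\ln/\ln 2$, the same statement in base $2$ follows by scaling every term by $1/\ln 2$. This scaling does not affect convexity, tightness, or the SCA iterates.

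\emph{Step 1 (log-linear lower bound on the rate).} Set $u=\ln\gamma$. The function $\varphi(u)=\ln(1+e^{u})$ is convex in $u$, so it lies above its tangent at $u_0=\ln\gamma^{(\delta)}$:
\begin{equation*}
\ln(1+\gamma)\geq \ln(1+\gamma^{(\delta)})+\frac{\gamma^{(\delta)}}{1+\gamma^{(\delta)}}\big(\ln\gamma-\ln\gamma^{(\delta)}\big).
\end{equation*}
The slope $\gamma^{(\delta)}/(1+\gamma^{(\delta)})$ equals $w_{kbp}\rho^{(\delta)}_{kbp}$ divided by the full received power plus noise. This is exactly $\theta_{kbp}^{(\delta)}$ in~\eqref{thetat}, because the sum there runs over all $(\ell,q)$ including the desired term. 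Note that $\mu^{(\delta)}_{kbp}=I^{(\delta)}_{kbp}+\sigma_{\mathrm{n},B}^2$ by~\eqref{mut}, so $\gamma^{(\delta)}=w_{kbp}\rho^{(\delta)}_{kbp}/\mu^{(\delta)}_{kbp}$.

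\emph{Step 2 (linearize the interference term).} Expand $\ln\gamma=\ln(w_{kbp}\rho_{kbp})-\ln(I_{kbp}+\sigma_{\mathrm{n},B}^2)$. Since $\ln$ is concave, it lies below its tangent at $\mu^{(\delta)}_{kbp}$:
\begin{equation*}
\ln(I+\sigma^2)\leq \ln\mu^{(\delta)}+(I-I^{(\delta)})/\mu^{(\delta)}.
\end{equation*}
Hence $-\ln(I+\sigma^2)\geq-\ln\mu^{(\delta)}-(I-I^{(\delta)})/\mu^{(\delta)}$. Multiplying by $\theta^{(\delta)}\ge0$ and substituting into Step 1 gives
\begin{equation*}
C_{kbp}\geq\theta^{(\delta)}\ln(w\rho_{kbp})-\frac{\theta^{(\delta)}}{\mu^{(\delta)}}I_{kbp}+\zeta',
\end{equation*}
where $\zeta'$ collects all constant terms.

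\emph{Step 3 (identify the constant and verify convexity and tightness).} Collect the constants: $\zeta'=\ln(1+\gamma^{(\delta)})-\theta^{(\delta)}\ln\gamma^{(\delta)}-\theta^{(\delta)}\ln\mu^{(\delta)}+\theta^{(\delta)}I^{(\delta)}/\mu^{(\delta)}$. Using $\ln\gamma^{(\delta)}+\ln\mu^{(\delta)}=\ln(w\rho^{(\delta)}_{kbp})$, this reduces to $\zeta_{kbp}^{(\delta)}$ in~\eqref{zetat}. Negating the inequality yields~\eqref{approx01}.

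The surrogate is convex in $\boldsymbol\rho$: it is $-\theta\ln(\cdot)$ of a positive linear function, with $\theta\ge 0$, plus a linear term and a constant. Tightness at $\boldsymbol\rho^{(\delta)}$ holds because both tangent inequalities become equalities at their expansion points, $\gamma=\gamma^{(\delta)}$ and $I=I^{(\delta)}$.

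The only delicate point is bookkeeping rather than mathematics. One must check that the constants combine exactly into~\eqref{zetat}, and that the base of the logarithm is handled consistently, since in base $2$ the tangent of $\log(\cdot)$ picks up a factor $1/\ln 2$. One must also assume $\rho_{kbp}^{(\delta)}>0$, which the minimum-SINR constraint~\eqref{pc23} guarantees, so that $\ln\gamma^{(\delta)}$ is finite.
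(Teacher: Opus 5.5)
Your proposal is correct and follows the paper's proof step for step. The paper also applies the log-linear lower bound $\log(1+x)\geq\theta\log x+\kappa$ at $x_0=\gamma_{kbp}(\boldsymbol{\rho}^{(\delta)})$; it cites this bound, whereas you derive it from convexity of $\ln(1+e^{u})$. It then uses the first-order Taylor upper bound on the concave $\log\mu_{kbp}$ and collects the constants into $\zeta_{kbp}^{(\delta)}$, as you do.

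Your use of the natural logarithm is necessary, not mere bookkeeping. The paper's Taylor step uses slope $1/\mu_{kbp}^{(\delta)}$, which is the tangent slope only for $\ln$. Read literally with $\log=\log_2$, the interference coefficient in \eqref{approx01} would have to be $\theta_{kbp}^{(\delta)}/(\mu_{kbp}^{(\delta)}\ln 2)$, which is exactly what your remark about scaling every term by $1/\ln 2$ produces.
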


\def\QEDmark{\ensuremath{\blacksquare}}
\def\proof{\emph{Proof: }}
\def\endproof{\hfill\QEDmark}

\proof
See Appendix~\ref{appen1}.
\endproof

Using Lemma~\ref{lemma1}, we form the SCA subproblem at iteration $\delta$ by replacing $-C_{kbp}$ in~\eqref{po2} with its surrogate $-C_{kbp}^{(\delta+1)}$ in~\eqref{approx01} while keeping the original constraints~\eqref{pc21}-\eqref{pc23}:
\begin{subequations}
\label{problem3}
\begin{align}
\text{(P3)}^{(\delta+1)}:\quad
\begin{split}
\label{po3}
\underset{\boldsymbol{\rho}}{\mindot}\quad
& \sum_{k=1}^{K} \sum_{b=1}^{B} \sum_{p=1}^{P_{kb}}
- C_{kbp}^{(\delta+1)}(\boldsymbol{\rho})
\end{split}
\\
\begin{split}
\label{pc31}
\textrm{s.t.}\quad
& \sum_{b=1}^{B} \sum_{p=1}^{P_{kb}} \rho_{kbp} \leq \rho_\text{tot},\quad \forall k
\end{split}
\\
\begin{split}
\label{pc32}
& \rho_{kbp} \geq 0,\quad \forall (k,b,p)
\end{split}
\\
\begin{split}
\label{pc33}
& \gamma_{kbp}(\boldsymbol{\rho}) \geq \gamma_\text{min},\quad \forall (k,b,p)
\end{split}
\end{align}
\end{subequations}
where $\gamma_{kbp}(\boldsymbol{\rho})$ is given by~\eqref{sinr0}. By construction, the objective~\eqref{po3} and the constraints~\eqref{pc31}-\eqref{pc33} are convex in $\boldsymbol{\rho}$. Hence, each subproblem $\text{(P3)}^{(\delta+1)}$ can be solved using standard convex optimization methods~\cite{convopt}. Let $\boldsymbol{\rho}^{(\delta+1)}$ denote the resulting optimizer. We then update the surrogate parameters using~\eqref{thetat}-\eqref{zetat} with $\boldsymbol{\rho}^{(\delta+1)}$ and repeat until convergence. The SCA convergence result applies to the fixed-schedule power-allocation problem when the per-GS power and minimum-SINR constraints are feasible and the algorithm is initialized with a feasible power vector. Under these conditions, the surrogate in Lemma~\ref{lemma1} is tight and first-order consistent at the current iterate, and the standard SCA conditions in~\cite{2016oe} ensure convergence to a stationary point of~(P2). When a sampled schedule is infeasible, the SCA step is bypassed and uniform power allocation is used only as a fallback to continue the environment rollout; the event is recorded separately and is not regarded as satisfying the minimum-SINR constraints. In practical deployment, an infeasible schedule can be handled by reducing the target SINR or by rescheduling one or more UAMs to the satellite tier. This convergence statement applies only to the feasible fixed-schedule power-allocation problem and does not imply convergence of the overall DRL-SCA framework.

\subsection{General PPO Framework}
\label{genppo}

In this subsection, we briefly review the generic PPO learning framework. PPO alternates between collecting on-policy rollouts by interacting with the environment under the current policy and updating the policy and value networks for multiple epochs using the collected samples~\cite{ppopaper}. A scheduling episode consists of discrete time steps $t=0,\cdots,T_\text{s}-1$. At each step, the agent observes a state $\mathcal{S}_t$, samples an action $\mathcal{A}_t$ from the current policy $\pi_\vartheta(\mathcal{A}_t|\mathcal{S}_t)$, receives a reward $\mathcal{R}_t$, and transitions to the next state $\mathcal{S}_{t+1}$. From the rollout, we store tuples $(\mathcal{S}_t,\mathcal{A}_t,\mathcal{R}_t,d_t)$, where $d_t\in\{0,1\}$ indicates termination. In our scheduling setting, $\pi_\vartheta(\cdot|\mathcal{S}_t)$ is a categorical distribution over the discrete scheduling actions.

PPO typically uses generalized advantage estimation (GAE) to obtain low-variance advantage estimates while controlling bias using a tunable parameter~\cite{ppopaper}. Defining the temporal-difference error
\begin{equation}
\label{temporaldiff}
\delta_t = \mathcal{R}_t + \gamma_\text{D} (1-d_t)V_\varphi(\mathcal{S}_{t+1}) - V_\varphi(\mathcal{S}_t),
\end{equation}
where $0<\gamma_\text{D}<1$ is the discount factor, the GAE advantage estimate is computed recursively as
\begin{equation}
\label{advantageestimate}
\hat{A}_t = \delta_t+\gamma_\text{D}\lambda_\text{G}(1-d_t)\hat{A}_{t+1},
\end{equation}
with $\lambda_\text{G} \in [0,1]$ controlling the bias-variance tradeoff. The corresponding return target for critic regression is $\hat{\mathcal{R}}_t = \hat{A}_t + V_\varphi(\mathcal{S}_t)$. For policy updates, PPO forms the likelihood ratio
\begin{equation}
\label{likeratio}
r_t(\vartheta) = \frac{\pi_\vartheta(\mathcal{A}_t | \mathcal{S}_t)}{\pi_{\vartheta_{\text{old}}}(\mathcal{A}_t | \mathcal{S}_t)},
\end{equation}
and maximizes the clipped surrogate objective
\begin{equation}
\label{clippedobjective}
\mathcal{L}_{\mathrm{clip}}(\vartheta) =
\mathbb{E}\Big[\min\!\big(r_t(\vartheta)\hat{A}_t,\,
\mathrm{clip}(r_t(\vartheta),1-\varepsilon,1+\varepsilon)\hat{A}_t\big)\Big],
\end{equation}
where $\varepsilon \ll 1$ is the clipping parameter. The critic is trained by minimizing the mean-squared error
\begin{equation}
\label{targetmse}
\mathcal{L}_{\mathrm{v}}(\varphi) = \mathbb{E}\big[(V_\varphi(\mathcal{S}_t) - \hat{\mathcal{R}}_t)^2\big],
\end{equation}
and an entropy bonus encourages exploration:
\begin{equation}
\label{entropybonus}
\mathcal{L}_{\mathrm{ent}}(\vartheta) = \mathbb{E}\big[\mathcal{H}(\pi_\vartheta(\,\cdot\,| \mathcal{S}_t))\big].
\end{equation}
The overall PPO objective is
\begin{equation}
\label{combinedobjective}
\mathcal{J}(\vartheta,\varphi) =
\mathcal{L}_{\mathrm{clip}}(\vartheta) - c_v\,\mathcal{L}_{\mathrm{v}}(\varphi)
+ c_e\,\mathcal{L}_{\mathrm{ent}}(\vartheta),
\end{equation}
with coefficients $c_v,c_e>0$. In practice, the samples from the current rollout are partitioned into mini-batches and optimized for several epochs with gradient clipping. A new rollout is then collected under the updated policy, and this rollout-and-update cycle constitutes the PPO backbone used in our DRL-based algorithm.

\subsection{Environment and Agent Modeling for GeoSetPPO-Based UAM Scheduling}

We model the UAM scheduling problem as a Markov decision process (MDP) with state $\mathcal{S}_t$, action $\mathcal{A}_t$, transition dynamics $P(\mathcal{S}_{t+1}|\mathcal{S}_t,\mathcal{A}_t)$, and reward $\mathcal{R}_t$. On top of this environment, we design the GeoSetPPO actor $\pi_\vartheta(\mathcal{A}_t|\mathcal{S}_t)$ and critic $V_\varphi(\mathcal{S}_t)$. Note that the power-allocation module can be viewed as part of the environment: given a schedule $\mathcal{A}_t$, it maps the geometric state to per-link SINRs and, consequently, to the reward $\mathcal{R}_t$.

\begin{figure*}[t]
	\begin{center}
		{\includegraphics[width=1.9\columnwidth,keepaspectratio]
			{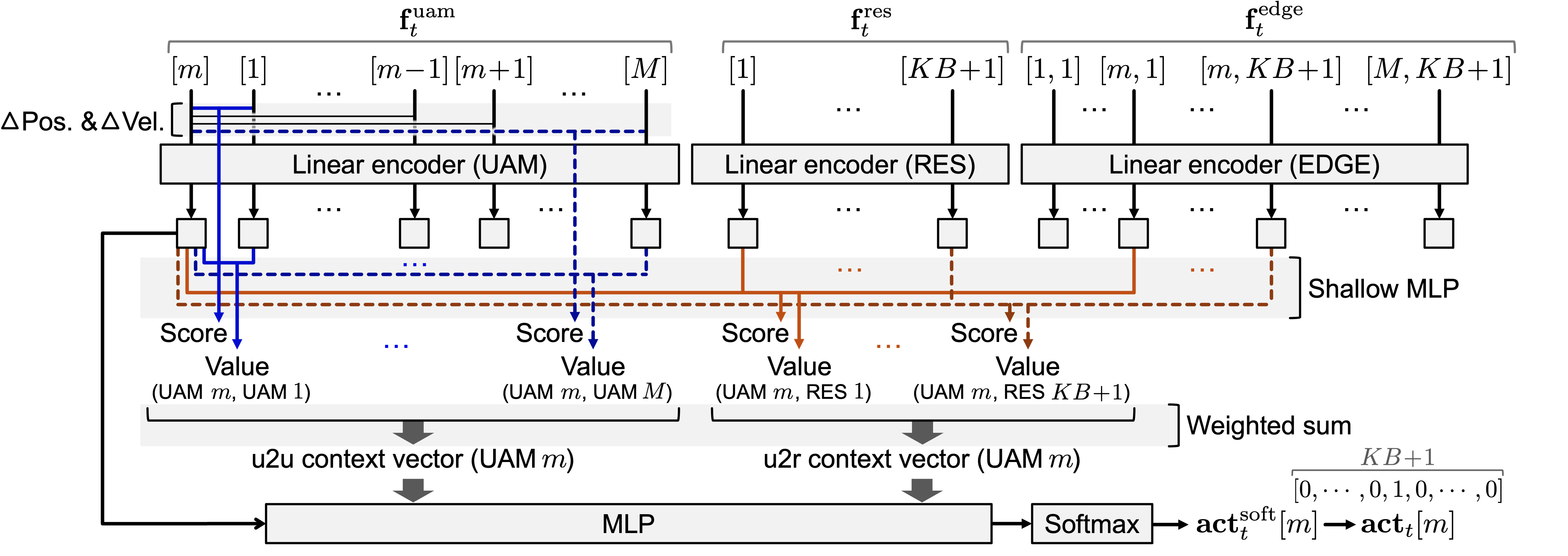}%
			\caption{Forward pass of the GeoSetPPO actor for a target UAM $m$. Set attention forms UAM-to-UAM and UAM-to-resource context vectors, then an MLP head outputs a categorical distribution over the $KB+1$ resources.}
			\label{f03}
		}
	\end{center}
	\vspace{-13pt}
\end{figure*}

\subsubsection{State Space}

The state $\mathcal{S}_t$ collects the information available to the UAM network controller at time step $t$. By assumption, the controller observes the UAM positions $\bold{u}_m[t]$ and velocities $\dot{\bold{u}}_m[t]$ for $m=1,\cdots,M$, as well as the GS positions $\bold{u}_k^\text{G}$ for $k=1,\cdots,K$. To account for handover-related effects, we additionally include the previous scheduling decision $\mathcal{A}_{t-1}$ for $t\geq 1$. For notational simplicity, we define $\mathcal{A}_{-1}$ as an arbitrary action and omit it from the network inputs at $t=0$. The state is thus given by
\begin{equation}
\label{statetuple}
\begin{aligned}
\mathcal{S}_t = \big(
&\bold{u}_1[t],\cdots,\bold{u}_M[t],
\dot{\bold{u}}_1[t],\cdots,\dot{\bold{u}}_M[t],\\
&\bold{u}_1^\text{G},\cdots,\bold{u}_K^\text{G},
\mathcal{A}_{t-1}
\big).
\end{aligned}
\end{equation}
The agent observes $\mathcal{S}_t$ and transforms it into feature representations that serve as inputs to the GeoSetPPO actor and critic models.

\subsubsection{Action Space}

As described in Section~\ref{schedprob}, we treat each GS-subband pair and the satellite as a resource. An action $\mathcal{A}_t$ specifies, for all UAMs, which resource to connect to. Since each of the $K$ GSs offers $B$ subbands and the satellite has no subband choice, there are $KB+1$ resources in total. For UAM $m$ at time step $t$, the scheduling decision is represented by a one-hot vector $\mathbf{act}_t[m]\in\{0,1\}^{KB+1}$. If UAM $m$ is associated with GS $k$ on subband $b$, then the $((k-1)B+b)$-th entry of $\mathbf{act}_t[m]$ is $1$. If it is associated with the satellite, then the $(KB+1)$-th entry is $1$. The resulting joint action at time step $t$ is
\begin{equation}
\label{actiontuple}
\mathcal{A}_t=\big(\mathbf{act}_t[1],\cdots,\mathbf{act}_t[M]\big),
\end{equation}
which encodes the scheduling decisions for all UAMs at time step $t$.

\subsubsection{Reward}

We adopt the per-step objective of (P1) as the reward. From~\eqref{po1}, the reward at time step $t$ is
\begin{equation}
\label{rewarddef}
\mathcal{R}_t
=
\frac{1}{B}
\sum_{k=1}^{K}
\sum_{b=1}^{B}
\sum_{p=1}^{P_{kb}}
C_{kbp}[t]
+\mathcal{R}_t^\text{sat}
-\mathcal{P}_t^\text{ho}
-\mathcal{P}_t^\text{gs},
\end{equation}
where $\mathcal{R}_t^\text{sat}$, $\mathcal{P}_t^\text{ho}$, and $\mathcal{P}_t^\text{gs}$ are computed using~\eqref{pc11}, \eqref{pc12}, and \eqref{pc13}, respectively. Hence, $\mathcal{R}_t$ combines the GS-tier throughput and satellite-tier throughput with penalties for handovers and GS overload. The relative strengths of these penalties are controlled by $c^\text{B}$, $c^\text{G}$, $c^\text{S}$, and $c_\text{gs}$. Since handover cost typically increases from inter-subband changes to inter-GS handovers and then to GS-satellite handovers, we typically set $0\leq c^\text{B}<c^\text{G}<c^\text{S}$.

\subsubsection{State Features}

We convert $\mathcal{S}_t$ into three feature sets: UAM features, resource features, and edge (UAM-resource) features:
\begin{equation}
\label{threefeats}
\begin{aligned}
\mathbf{F}_t^\text{uam}&=(\mathbf{f}_t^\text{uam}[1], \cdots,\mathbf{f}_t^\text{uam}[M]),\\
\mathbf{F}_t^\text{res}&=(\mathbf{f}_t^\text{res}[1], \cdots,\mathbf{f}_t^\text{res}[KB+1]),\\
\mathbf{F}_t^\text{edge}&=(\mathbf{f}_t^\text{edge}[1,1], \cdots,\mathbf{f}_t^\text{edge}[M,KB+1]).
\end{aligned}
\end{equation}
Here, $KB+1$ is the total number of resources. For the $m$-th UAM, we define
\begin{equation}
\label{uamfeats}
\mathbf{f}_t^\text{uam}[m]=\big[\bold{u}_m[t];\dot{\bold{u}}_m[t];\mathbf{act}_{t-1}[m]\big]
\in \mathbb{R}^{KB+7},
\end{equation}
where $\mathbf{act}_{t-1}[m]$ is the previous scheduling decision for that UAM. Let $i=(k-1)B+b$ with $k\in\{1,\cdots,K\}$ and $b\in\{1,\cdots,B\}$ so that the $i$-th resource corresponds to the $k$-th GS and $b$-th subband.
The corresponding resource feature is
\begin{equation}
\label{gsfeats}
\mathbf{f}_t^\text{res}[i]=\big[\mathbf{idx}_t^{\text{GS},k};\mathbf{idx}_t^{\text{band},b};\bold{u}_k^\text{G};0\big]\in\mathbb{R}^{K+B+4},
\end{equation}
where $\mathbf{idx}_t^{\text{GS},k}\in\mathbb{R}^{K}$ and $\mathbf{idx}_t^{\text{band},b}\in\mathbb{R}^{B}$ are one-hot encodings of the GS and subband indices, respectively. For the satellite resource ($i=KB+1$), we simply use a one-hot ``satellite flag":
\begin{equation}
\label{satfeats}
\mathbf{f}_t^\text{res}[KB+1]=[0,\cdots,0,1]^\text{T}\in\mathbb{R}^{K+B+4}.
\end{equation}
Using the same mapping $i=(k-1)B+b$ with $k\in\{1,\cdots,K\}$ and $b\in\{1,\cdots,B\}$, the edge feature between the $m$-th UAM and the $i$-th resource is
\begin{equation}
\label{edgegsfeats}
\mathbf{f}_t^\text{edge}[m,i]=
\big[\|\bold{u}_m[t]-\bold{u}_k^\text{G}\|^2,\;\mathrm{IF}_t^{m,k}\big]^\text{T}\in\mathbb{R}^{2},
\end{equation}
which captures both link quality (squared distance) and a heuristic interference score. Specifically, $\mathrm{IF}_t^{m,k}$ quantifies the potential interference that would be induced if GS $k$ serves UAM $m$ at time step $t$:
\begin{equation}
\label{ifcause}
\begin{aligned}
\mathrm{IF}_t^{m,k}
= \sum_{n\neq m}^{M}
&\Big|\bold{a}_{N_x}(\cos\alpha_{m}^{(k)}[t])^\text{H}\bold{a}_{N_x}(\cos\alpha_{n}^{(k)}[t])\\
&\cdot
\bold{a}_{N_y}(\cos\beta_{m}^{(k)}[t])^\text{H}\bold{a}_{N_y}(\cos\beta_{n}^{(k)}[t])\Big|^2 \\
&\times \frac{1}{\|\bold{u}_n[t]-\bold{u}_k^\text{G}\|^2}.
\end{aligned}
\end{equation}
For the satellite resource ($i=KB+1$), we set
\begin{equation}
\label{edgesatfeats}
\mathbf{f}_t^\text{edge}[m,KB+1]=[0,0]^\text{T},
\end{equation}
since the satellite uses an orthogonal band and the satellite-tier sum throughput is fixed to $C_\text{SAT}$. $\\$

\begin{algorithm}[t]
\caption{Two-Stage GeoSetPPO Training}
\label{alg01}
\begin{algorithmic}[1]
\Require Environment model, GeoSetPPO actor $\pi_\vartheta$, critic $V_\varphi$, PPO hyperparameters $(\gamma_\text{D},\lambda_\text{G},\varepsilon,c_v,c_e)$, SCA power-allocation module, warm-up iterations $N_{\text{uni}}$, transition horizon $N_{\text{mix}}$, total iterations $N_{\text{train}}$
\State Initialize actor parameters $\vartheta$ and critic parameters $\varphi$
\For{PPO iteration $r = 1,\cdots,N_{\text{train}}$}
    \State Set reference policy $\vartheta_{\text{old}} \gets \vartheta$
    \If{$r \le N_{\text{uni}}$}
        \State $\eta_r \gets 1$
    \Else
        \State $\eta_r \gets \max\{0,\, 1 - (r - N_{\text{uni}})/N_{\text{mix}}\}$
    \EndIf
    \State $\mathcal{D} \leftarrow \emptyset$
    \For{episode $e_\text{epi} = 1,\cdots,N_{\text{epi}}$}
        \State Reset environment and obtain initial state $\mathcal{S}_0$
        \For{$t = 0,\cdots,T_s-1$}
            \State Construct state $\mathcal{S}_t$ from~\eqref{statetuple}
            \State Sample scheduling action $\mathcal{A}_t \sim \pi_{\vartheta_{\text{old}}}(\cdot | \mathcal{S}_t)$
            \State Compute uniform powers $\boldsymbol{\rho}_t^{\text{uni}}$
            \If{$r \le N_{\text{uni}}$}
                \State $\boldsymbol{\rho}_t \gets \boldsymbol{\rho}_t^{\text{uni}}$
            \Else
                \State $\boldsymbol{\rho}_t^{\text{sca}} \gets \text{SCA-Power-Allocation}(\mathcal{S}_t,\mathcal{A}_t)$
                \State $\boldsymbol{\rho}_t \gets \eta_r \boldsymbol{\rho}_t^{\text{uni}} + (1-\eta_r)\boldsymbol{\rho}_t^{\text{sca}}$
            \EndIf
            \State Evaluate reward $\mathcal{R}_t$ using $\boldsymbol{\rho}_t$
            \State Step environment to $\mathcal{S}_{t+1}$ with terminal flag $d_t$
            \State Store $(\mathcal{S}_t,\mathcal{A}_t,\mathcal{R}_t,\mathcal{S}_{t+1},d_t,
                   \log\pi_{\vartheta_{\text{old}}}(\mathcal{A}_t|\mathcal{S}_t),$ $V_\varphi(\mathcal{S}_t))$ in $\mathcal{D}$
        \EndFor
    \EndFor
    \State Compute $\delta_t$ by \eqref{temporaldiff} and $\hat{A}_t$ by \eqref{advantageestimate} on $\mathcal{D}$
    \State Set $\hat{\mathcal{R}}_t = \hat{A}_t + V_\varphi(\mathcal{S}_t)$ and normalize $\{\hat{A}_t\}$
    \For{epoch $e\!=\!1,\cdots,N_{\text{epo}}$ and mini-batch $\mathcal{B}\!\subset\!\mathcal{D}$}
            \State Recompute $\log \pi_\vartheta(\mathcal{A}_t|\mathcal{S}_t)$ and $V_\varphi(\mathcal{S}_t)$ for $t\in\mathcal{B}$
            \State $r_t(\vartheta) \gets \exp\big(\log\pi_\vartheta(\mathcal{A}_t|\mathcal{S}_t) - \log\pi_{\vartheta_{\text{old}}}(\mathcal{A}_t|\mathcal{S}_t)\big)$
            \State Set $\mathcal{L}_{\mathrm{clip}}(\vartheta)$, $\mathcal{L}_{\mathrm{v}}(\varphi)$, $\mathcal{L}_{\mathrm{ent}}(\vartheta)$ as in \eqref{clippedobjective}, \eqref{targetmse}, \eqref{entropybonus}
            \State Combine into total objective $\mathcal{J}(\vartheta,\varphi)$ by \eqref{combinedobjective}
            \State Update $\vartheta$ and $\varphi$ by gradient-ascent on $\mathcal{J}(\vartheta,\varphi)$
    \EndFor
    \State \textbf{if} performance converged \textbf{then break}
\EndFor
\State \Return trained GeoSetPPO actor $\pi_\vartheta$ and critic $V_\varphi$
\end{algorithmic}
\end{algorithm}
\vspace{-7pt}

\subsubsection{Actor}

The feature sets $\mathbf{F}_t^\text{uam}$, $\mathbf{F}_t^\text{res}$, and $\mathbf{F}_t^\text{edge}$ are fed to the GeoSetPPO actor. 
Rather than explicitly parameterizing a categorical distribution over all $(KB+1)^M$ joint assignments, the actor uses a shared per-UAM categorical head and produces $M(KB+1)$ logits. The factorization is applied only at the output, since each per-UAM distribution is conditioned on the complete UAM, resource, and edge sets, and the assembled joint schedule is evaluated using a common reward that reflects the satellite-tier contribution, subband interference, GS overload, and SCA-based power allocation. Thus, the factorized policy provides a scalable approximation to the coupled scheduling problem, although it does not impose hard feasibility constraints on the sampled joint action.
Consequently, scheduling decisions are driven by relative geometry and UAM-resource interactions and are not affected by the arbitrary ordering of UAM indices. To this end, we employ a set-attention architecture that incorporates geometry-derived pairwise information in attention scoring, such as relative position and relative velocity~\cite{setattention}. In particular, for a fixed target UAM~$m$, the actor output is invariant to any permutation of the remaining UAM or resource tokens.\footnote{This permutation invariance concerns arbitrary token ordering and does not imply zero-shot transfer across different network dimensions, GS layouts, mobility distributions, subband configurations, or reward definitions, which may require retraining or fine-tuning.} The overall actor architecture is illustrated in Fig.~\ref{f03}. For each time step $t$ and $\forall m,i$, we first pass the raw features through linear encoders
\begin{equation}
\label{linearfeats}
\begin{aligned}
\mathbf{g}_t^\text{uam}[m]&=\mathrm{enc}_\text{u}(\mathbf{f}_t^\text{uam}[m]),\\
\mathbf{g}_t^\text{res}[i]&=\mathrm{enc}_\text{r}(\mathbf{f}_t^\text{res}[i]),\\
\mathbf{g}_t^\text{edge}[m,i]&=\mathrm{enc}_\text{e}(\mathbf{f}_t^\text{edge}[m,i]),
\end{aligned}
\end{equation}
where $\mathbf{g}_t^\text{uam}[m],\mathbf{g}_t^\text{res}[i],\mathbf{g}_t^\text{edge}[m,i]\in\mathbb{R}^{H}$ and $H$ is the hidden dimension. We then apply Bahdanau-style attention~\cite{bahdanaupaper} to capture UAM-UAM and UAM-resource interactions. The attention scores $\mathrm{s}_t^\text{u2u}[m,n]$ and $\mathrm{s}_t^\text{u2r}[m,i]$ are computed using shallow MLPs, $\mathrm{score}_\text{u2u}$ and $\mathrm{score}_\text{u2r}$:
\begin{equation}
\label{attscore}
\begin{aligned}
\mathrm{s}_t^\text{u2u}[m,n]
&=\mathrm{score}_\text{u2u}\big([\mathbf{g}_t^\text{uam}[m];\mathbf{g}_t^\text{uam}[n];
\Delta\bold{u}_{m,n}[t];\\
&\hspace{1.4cm}\Delta\dot{\bold{u}}_{m,n}[t];
\bold{u}_{m}[t];\dot{\bold{u}}_{m}[t]]\big),\\
\mathrm{s}_t^\text{u2r}[m,i]
&=\mathrm{score}_\text{u2r}\big([\mathbf{g}_t^\text{uam}[m];
\mathbf{g}_t^\text{res}[i];\mathbf{g}_t^\text{edge}[m,i]]\big),
\end{aligned}
\end{equation}
where $\Delta\bold{u}_{m,n}[t]=\bold{u}_{n}[t]-\bold{u}_{m}[t]$ and $\Delta\dot{\bold{u}}_{m,n}[t]=\dot{\bold{u}}_{n}[t]-\dot{\bold{u}}_{m}[t]$ are the relative position and velocity, respectively. Attention weights are obtained via softmax:
\begin{equation}
\label{attweight}
\begin{aligned}
\mathrm{w}_t^\text{u2u}[m,n]&=\underset{n\in \{1,\cdots,M\}}{\mathrm{softmax}}(\mathrm{s}_t^\text{u2u}[m,n]),\\
\mathrm{w}_t^\text{u2r}[m,i]&=\underset{i\in \{1,\cdots,KB+1\}}{\mathrm{softmax}}(\mathrm{s}_t^\text{u2r}[m,i]).
\end{aligned}
\end{equation}
The corresponding value vectors are
\begin{equation}
\label{attvalue}
\begin{aligned}
\mathbf{v}_t^\text{u2u}[m,n]
&=\mathrm{value}_\text{u2u}\big([\mathbf{g}_t^\text{uam}[n];
\Delta\bold{u}_{m,n}[t];
\Delta\dot{\bold{u}}_{m,n}[t]]\big),\\
\mathbf{v}_t^\text{u2r}[m,i]
&=\mathrm{value}_\text{u2r}\big([\mathbf{g}_t^\text{res}[i];
\mathbf{g}_t^\text{edge}[m,i]]\big),
\end{aligned}
\end{equation}
where $\mathrm{value}_\text{u2u}$ and $\mathrm{value}_\text{u2r}$ are also shallow MLPs. The UAM-UAM and UAM-resource context vectors for UAM $m$ are then
\begin{equation}
\label{contexts}
\begin{aligned}
\mathbf{ctx}_t^\text{u2u}[m]&=\sum_{n=1}^{M}\mathrm{w}_t^\text{u2u}[m,n]\mathbf{v}_t^\text{u2u}[m,n],\\
\mathbf{ctx}_t^\text{u2r}[m]&=\sum_{i=1}^{KB+1}\mathrm{w}_t^\text{u2r}[m,i]\mathbf{v}_t^\text{u2r}[m,i].
\end{aligned}
\end{equation}
Finally, the actor produces a categorical distribution over the $KB+1$ resources:
\begin{equation}
\label{policyout}
\begin{aligned}
\mathbf{act}_t^\text{soft}[m]
=\mathrm{softmax}\big(
\mathrm{lgt}([\mathbf{g}_t^\text{uam}[m];
\mathbf{ctx}_t^\text{u2u}[m];
\mathbf{ctx}_t^\text{u2r}[m]])
\big),
\end{aligned}
\end{equation}
where $\mathrm{lgt}$ is an MLP that outputs $KB+1$ logits, and $\mathbf{act}_t^\text{soft}[m]\in\mathbb{R}^{KB+1}$. During training, the one-hot action vector $\mathbf{act}_t[m]$ is sampled from the categorical distribution defined by $\mathbf{act}_t^\text{soft}[m]$, whereas during evaluation we select the resource with the largest probability. Aggregating the per-UAM decisions into $\mathcal{A}_t$ as in~\eqref{actiontuple} yields the joint policy $\pi_\vartheta(\mathcal{A}_t|\mathcal{S}_t)$.

\begin{figure}[t]
	\begin{center}
		{\includegraphics[width=0.95\columnwidth,keepaspectratio]
			{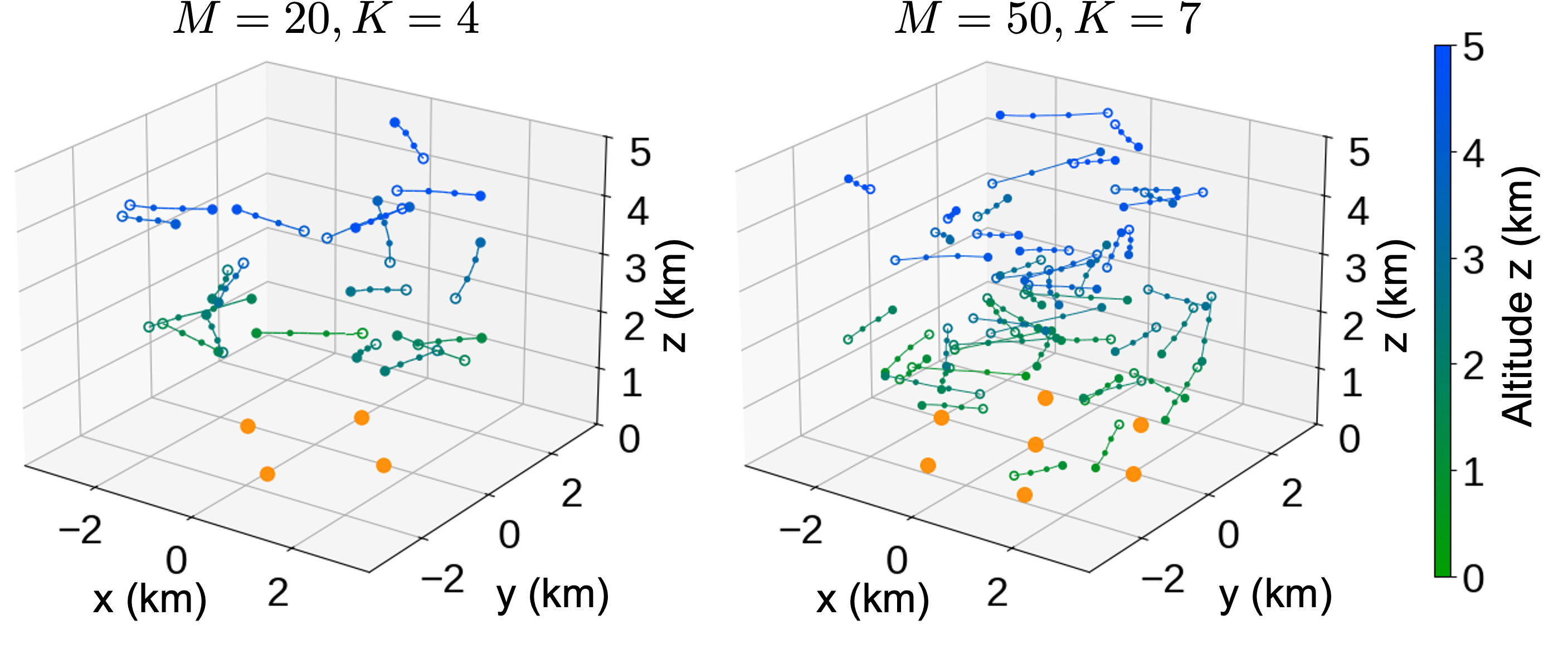}%
			\caption{Example 3D deployments and UAM trajectories. Filled markers and hollow markers denote UAM locations at $t=0$ and $t=T_\text{s}-1$. Smaller markers indicate intermediate trajectory samples every $20$~s, and orange markers denote GS locations.}
			\label{fig06}
		}
	\end{center}
	\vspace{-10pt}
\end{figure}

\subsubsection{Critic}

The critic network $V_\varphi(\mathcal{S}_t)$ estimates the value of the current state by aggregating information from all UAMs, resources, and edges. Each feature type is first encoded and then processed through attention and pooling modules to obtain permutation-invariant summaries. These summaries are concatenated and fed to an MLP value head to produce the scalar estimate $V_\varphi(\mathcal{S}_t)$. In this way, the critic shares the same inductive bias as the actor, capturing geometry-driven interactions among UAMs and available resources.

\subsection{Combined Training Algorithm for GeoSetPPO}
\label{sec_combalg}

Building on the proposed GeoSetPPO actor-critic architectures, we train the scheduling agent using the PPO framework summarized in Section~\ref{genppo}. The GeoSetPPO actor implements the policy $\pi_\vartheta(\mathcal{A}_t|\mathcal{S}_t)$, and the critic $V_\varphi(\mathcal{S}_t)$ provides baseline estimates for advantage computation.
Training alternates between collecting on-policy rollouts and updating the actor and critic: temporal-difference errors $\delta_t$ are computed via~\eqref{temporaldiff}, advantages $\hat{A}_t$ are estimated using GAE in~\eqref{advantageestimate}, and network parameters are updated by maximizing the PPO objective in~\eqref{combinedobjective}.

On top of this standard PPO backbone, we introduce a two-stage training strategy that controls the power-allocation rule used in reward evaluation, as illustrated in Fig.~\ref{f02}. In the first stage, we use uniform power allocation after each scheduling decision. Given $\mathcal{A}_t$, transmit power is obtained by equally splitting the per-GS power budget across the scheduled links, and the reward $\mathcal{R}_t$ is computed accordingly. This stage runs for $N_{\text{uni}}$ PPO iterations, avoiding nonlinear-programming overhead and exposing both the actor and critic to a simple reward structure.

{\renewcommand{\arraystretch}{1.02}
\begin{table}[!t]
	\centering
	\caption{Simulation parameters}
	\scriptsize
	\label{tbl01}
	\setlength{\tabcolsep}{3pt}
	\begin{tabular}{p{0.61\columnwidth}|p{0.33\columnwidth}}
	\hline
	\bfseries{Parameter} & \bfseries{Value} \\
	\hhline{=|=}
	Number of UAMs and GSs ($M,K$) & $(20,4)$, $(50,7)$\\
	\hline
	Per-GS user threshold ($\mathcal{N}_\text{GS}$) & $4$, $6$\\
	\hline
	Number of GS subbands ($B$) & $1$, $2$\\
	\hline
	Size of GS antenna array ($N_x,N_y$) & $(4,4)$, $(6,6)$, $(10,10)$\\
	\hline
	GS transmit power budget ($\rho_\text{tot}$) & $1$~\si{\watt}\\
	\hline
	GS carrier frequency ($c/\lambda$) & $7.5$~\si{\giga\hertz}\\
	\hline
	Total GS-tier bandwidth ($W_\text{G}$) & $100$~\si{\mega\hertz}\\
    \hline
    Minimum SINR threshold ($\gamma_\text{min}$)	& $0$\,\,\si{dB}\\
	\hline
    GS-UAM Rician factor ($K_\text{R}$) & $20$~\si{dB}\\
    \hline
    Satellite carrier frequency and bandwidth ($c/\lambda_\text{S},W_\text{S}$) & $20$~\si{\giga\hertz}, $100$~\si{\mega\hertz}\\
    \hline
    LEO altitude and spot-beam transmit power ($z^\text{S},\rho_\text{S}$) & $600$~\si{\kilo\metre}, $50$~\si{\watt}\\
    \hline
    Satellite Tx and UAM Rx gains ($G_\text{S}^\text{T}(0),G_\text{S}^\text{R}$) & $37.39$, $20$~\si{dBi}\\
    \hline
    Other satellite-link loss ($\mathcal{L}_\text{S}$) & $10$~\si{dB}\\
	\hline
	Satellite spectral efficiency ($C_\text{SAT}$) & $2$~\si{bps/Hz}\\
	\hline
	Time horizon ($\tau_\text{s}T_\text{s}$) & $60$~\si{\second}\\
	\hline
	Time slot duration ($\tau_\text{s}$) & $5$~\si{\second}\\
	\hline
	Max. and min. altitude of UAM & $5$, $0.5$~\si{\kilo\metre}\\
	\hline
	Max. and min. speed of UAM & $50$, $10$~\si{\metre/\second}\\
	\hline
	Minimum GS separation & $2$~\si{\kilo\metre}\\
	\hline
	Penalty coefficients ($c^\text{B},c^\text{G},c^\text{S},c_\text{gs}$) & $0.2$, $0.6$, $1.0$, $0.5$\\
	\hline
    PPO learning rate & $10^{-4}$\\
    \hline
    PPO mini-batch size & $256$\\
    \hline
    PPO update epochs ($N_\text{epo}$) & $4$\\
    \hline

    PPO discount factor ($\gamma_\text{D}$)
    & $0.99$\\

    \hline

    GAE parameter ($\lambda_\text{G}$)
    & $0.95$\\

    \hline

    PPO clipping parameter ($\varepsilon$)
    & $0.1$\\

    \hline

	Warm-up steps ($N_\text{uni}N_\text{epi}T_\text{s}$) & $10$~\si{M}, $14$~\si{M}\\
	\hline
	Transition steps ($N_\text{mix}N_\text{epi}T_\text{s}$) & $50000$\\
	\hline
	Total steps ($N_\text{train}N_\text{epi}T_\text{s}$) & $10.4$~\si{M}, $14.4$~\si{M}\\
	\hline
	\end{tabular}
	\vspace{-1pt}
\end{table}
}

After the first stage, we gradually introduce the SCA-based power allocator into reward computation. Let $r$ denote the PPO iteration index. For $r > N_{\text{uni}}$, we define a mixing weight $\eta_r \in [0,1]$ that decreases linearly over a transition horizon of $N_{\text{mix}}$ iterations. At each iteration $r$, we compute the uniform power vector $\boldsymbol{\rho}_t^{\text{uni}}$ and also obtain the SCA solution $\boldsymbol{\rho}_t^{\text{sca}}$ for the scheduled links. The power vector used for rate and reward evaluation is then the linear combination
\begin{equation}
\label{rho_mix}
\boldsymbol{\rho}_t = \eta_r \boldsymbol{\rho}_t^{\text{uni}} + (1-\eta_r)\boldsymbol{\rho}_t^{\text{sca}}.
\end{equation}
For $r \ge N_{\text{uni}} + N_{\text{mix}}$, we have $\eta_r=0$ and reward evaluation becomes fully SCA-based.

The overall training process is described in Algorithm~\ref{alg01}. The proposed two-stage training strategy has two main advantages. First, it avoids repeatedly invoking the SCA-based power allocator, which is computationally expensive, during the early phase of training when many PPO iterations are required. Second, the gradual transition from uniform to SCA-based power allocation prevents an abrupt change in the reward distribution, which helps the critic track the return targets and improves the stability of PPO updates. As a result, the actor can progressively exploit the performance gains of the SCA solution while maintaining stable learning dynamics.

\section{Simulation Results}
\label{sec_simres}

Table~\ref{tbl01} summarizes the main simulation parameters. The total GS-tier bandwidth is fixed at $100$~MHz for all configurations and is equally divided among the $B$ GS subbands. Thus, each subband occupies $100/B$~MHz, and its noise power is calculated over the corresponding subband bandwidth. We consider four representative settings with $(M,K,\mathcal{N}_\text{GS},B,N_x,N_y)=(20,4,4,1,6,6)$, $(20,4,4,2,4,4)$, $(50,7,6,1,10,10)$, and $(50,7,6,2,6,6)$. The multi-subband cases allow the scheduler to exploit frequency-domain separation for interference management, while the $B=1$ cases provide direct comparisons with the single-band association methods. Slightly larger GS arrays are used for $B=1$ because all GS transmissions share the same band and the resulting system is more strongly interference limited. The satellite parameters follow the link-budget setting in~\cite{myjsac}. They yield a nominal boresight SNR of approximately $14$~dB before system-level implementation margins and access overhead, for which the effective sum spectral efficiency $C_\text{SAT}=2$~bps/Hz is a conservative operating value.

\begin{figure}[t]
	\begin{center}
		{\includegraphics[width=0.95\columnwidth,keepaspectratio]
			{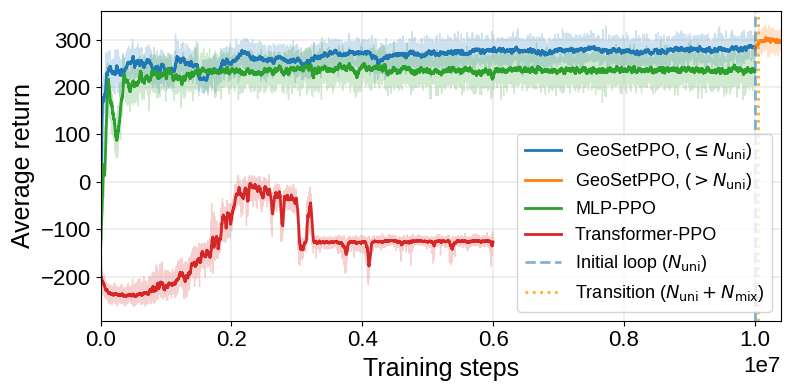}%
			\caption{Training curves of GeoSetPPO, MLP-PPO, and Transformer-PPO for $(M,K,B)=(20,4,2)$. Light curves show episode returns, bold curves show moving averages, and the vertical lines indicate the warm-up and transition boundaries of the two-stage GeoSetPPO training.}
			\label{fig:training}
		}
	\end{center}
	\vspace{-10pt}
\end{figure}

We compare GeoSetPPO with four scheduling baselines. MLP-PPO uses conventional multilayer perceptron (MLP) actor and critic networks without the proposed set-attention structure. Transformer-PPO encodes the UAM and resource features using a standard Transformer, but does not include the explicit relative-position, relative-velocity, and UAM-resource edge representations used by GeoSetPPO. The algorithm-based scheduler in~\cite{myjsac} and the distance-based heuristic determine only BS-UAM association. The distance-based method associates up to $\mathcal{N}_\text{GS}$ nearby UAMs with each GS and assigns the remaining UAMs to the satellite. The two association baselines determine BS-UAM association independently at each slot. For $B=2$, the resulting association is complemented by a deterministic persistent per-GS round-robin (RR) subband rule. UAMs that remain associated with the same GS retain their previous subbands, while newly associated UAMs are ordered by their azimuths around that GS and assigned sequentially to the least-loaded subband. This rule balances the subband loads and preserves subband assignments for retained links, but does not use channel gains, interference, or rate evaluations to select subbands.

\begin{figure}[t]
	\begin{center}
		{\includegraphics[width=1.0\columnwidth,keepaspectratio]
			{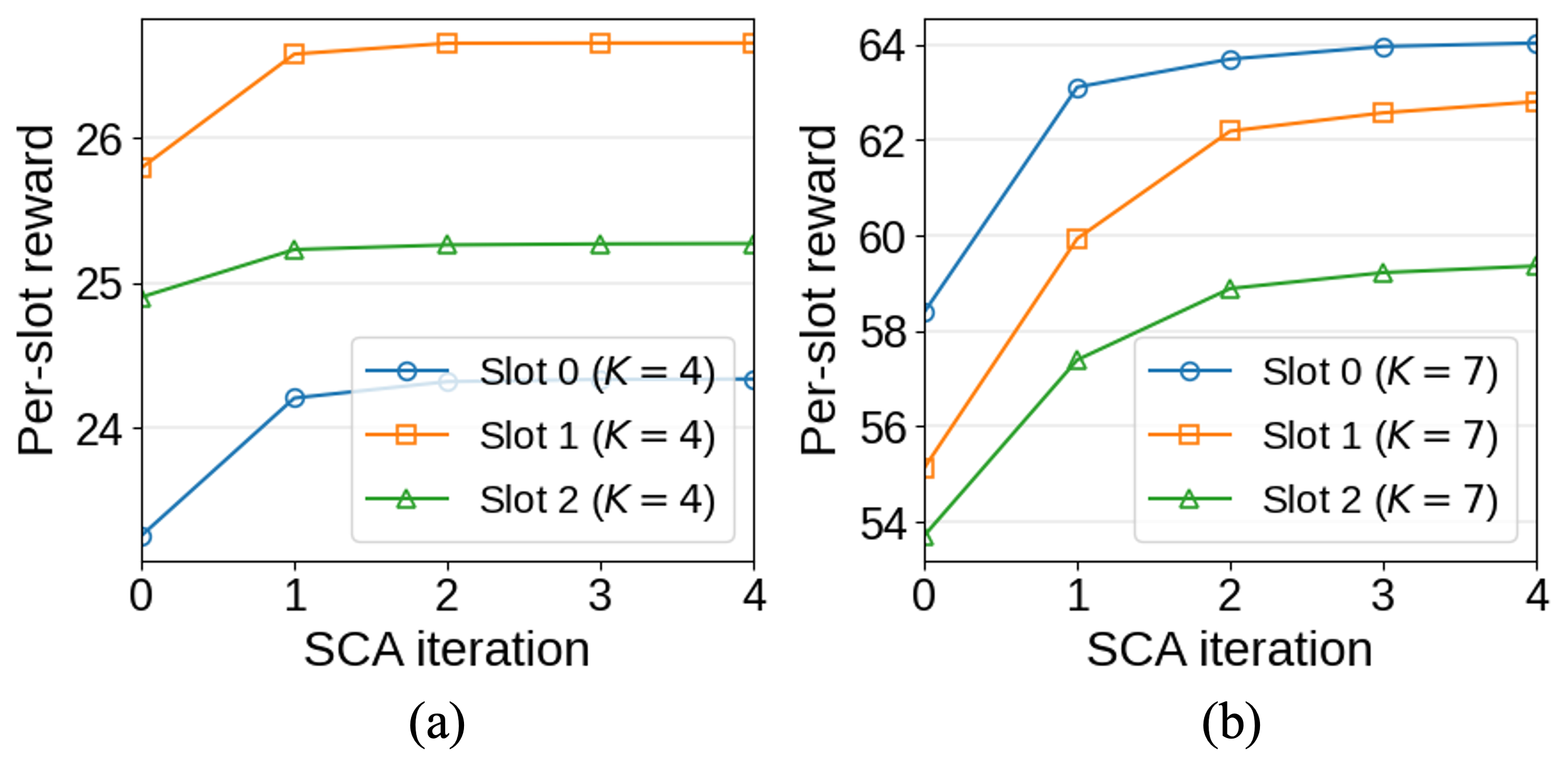}%
			\caption{Convergence of the SCA-based GS power allocation for $B=2$ under the minimum SINR constraint. Results are shown for the first three slots of randomly generated episodes with (a) $K=4$ and (b) $K=7$.}
			\label{fig:sca_convergence}
		}
	\end{center}
	\vspace{-10pt}
\end{figure}

\begin{figure}[t]
	\begin{center}
		{\includegraphics[width=1.0\columnwidth,keepaspectratio]
			{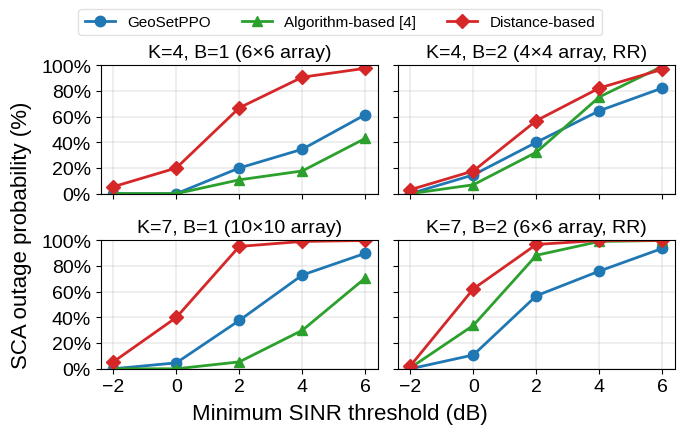}%
			\caption{SCA outage probability versus the minimum SINR threshold for GeoSetPPO, the algorithm-based scheduler, and the distance-based scheduler. RR subband allocation is applied to the two association baselines when $B=2$.}
			\label{fig:sca_outage}
		}
	\end{center}
	\vspace{-10pt}
\end{figure}

We generate random three-dimensional scenarios with UAM trajectories over a $60$~s horizon, consisting of $12$ decision steps with a $5$~s interval. At the considered speed range of $10$--$50$~m/s, a UAM travels $50$--$250$~m during one decision interval and approximately $0.6$--$3$~km over one episode. These displacements are comparable to the $2$-km minimum GS separation and therefore produce meaningful changes in link distances, beam directions, interference relationships, and handover decisions. Accordingly, the considered horizon is sufficient to evaluate the temporal coupling targeted in this work. Each UAM starts from a random initial position with a random initial velocity and a random acceleration process, which produces nontrivial trajectories within the target airspace. The GS locations follow fixed patterns, and the distance between the closest pair of GSs is $2$~km. Fig.~\ref{fig06} visualizes example deployments and trajectories for $(M,K)=(20,4)$ and $(50,7)$. Unless otherwise stated, the GS-to-UAM evaluation channels are strongly LoS dominant with a Rician $K$ factor of $20$~dB. Training and test episodes are generated independently using disjoint random seeds. The test set therefore contains unseen UAM initial states, trajectories, acceleration processes, and channel realizations drawn from the same network configuration. For each trained policy and operating point, the reported mean and its $95\%$ confidence interval are computed across independently generated test episodes with different UAM initial states, trajectories, acceleration processes, and Rician channel realizations.

The GeoSetPPO actor architecture is illustrated in Fig.~\ref{f03}. The score and value networks, $\mathrm{score}_\text{u2u}$, $\mathrm{score}_\text{u2r}$, $\mathrm{value}_\text{u2u}$, and $\mathrm{value}_\text{u2r}$, are MLPs with one hidden layer, and the policy head $\mathrm{lgt}$ is an MLP with four hidden layers. All hidden layers have width $256$, including the encoded feature dimension of $\mathbf{g}_t^\text{uam}[m]$, $\mathbf{g}_t^\text{res}[i]$, and $\mathbf{g}_t^\text{edge}[m,i]$. The MLP-PPO actor and critic each use four hidden layers of width $256$. The Transformer-PPO baseline receives the same per-UAM position, velocity, and previous-action information as token features, but relies on generic token attention without the geometry-derived pairwise scoring used by GeoSetPPO.

{\renewcommand{\arraystretch}{1.05}
\begin{table}[!t]
	\centering
	\caption{Runtime per scheduling instance and SCA power-allocation call (ms)}
	\footnotesize
	\label{tbl02}
	\setlength{\tabcolsep}{2.2pt}
	\begin{tabular}{cc|ccc||c}
	\hline
	$(M,K,B)$ & Statistic & GeoSetPPO & Alg.+RR & Dist.+RR & SCA\\
	\hline\hline
	$(20,4,2)$ & Mean & $2.782$ & $5.545$ & $1.198$ & $297.6$\\
	$(20,4,2)$ & Std. dev. & $0.136$ & $1.883$ & $1.000$ & $83.5$\\
	\hline
	$(50,7,2)$ & Mean & $2.899$ & $40.84$ & $1.340$ & $1871.8$\\
	$(50,7,2)$ & Std. dev. & $0.881$ & $2.064$ & $0.439$ & $399.9$\\
	\hline
	\end{tabular}
	\vspace{-1pt}
\end{table}
}

\begin{figure*}[t]
	\begin{center}
		{\includegraphics[width=1.8\columnwidth,keepaspectratio]
			{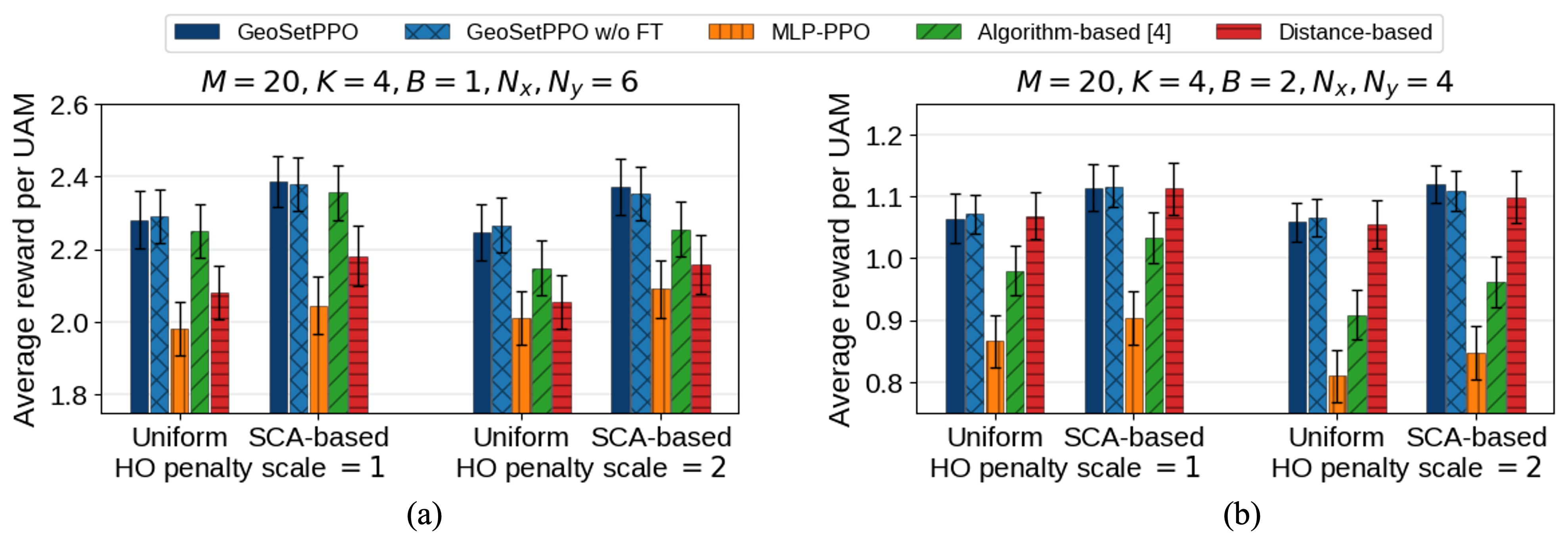}%
			\caption{Average reward per UAM for $(M,K)=(20,4)$ under uniform and SCA-based power allocation. The handover-penalty scale multiplies $(c^\text{B},c^\text{G},c^\text{S})=(0.2,0.6,1.0)$. ``GeoSetPPO w/o FT'' denotes the ablation without the final SCA-based fine-tuning stage.}
			\label{fig:reward_comparison}
		}
	\end{center}
	\vspace{-10pt}
\end{figure*}

\subsection{Convergence and Computational Analysis}

Fig.~\ref{fig:training} compares the learning behavior of the three PPO architectures. GeoSetPPO rapidly learns a stable policy during the uniform-power stage and achieves a higher return than MLP-PPO. After the transition to SCA-based reward evaluation, its return increases further without an abrupt loss of training stability. In contrast, Transformer-PPO does not obtain a competitive policy within the conducted training run. Although generic self-attention can exchange information among UAM and resource tokens, it does not explicitly represent the relative geometry that determines directional interference. Under the considered architecture and training budget, this result indicates that the explicit geometry-aware decomposition facilitates policy learning relative to generic token attention.

\begin{figure}[t]
	\begin{center}
		{\includegraphics[width=0.9\columnwidth,keepaspectratio]
			{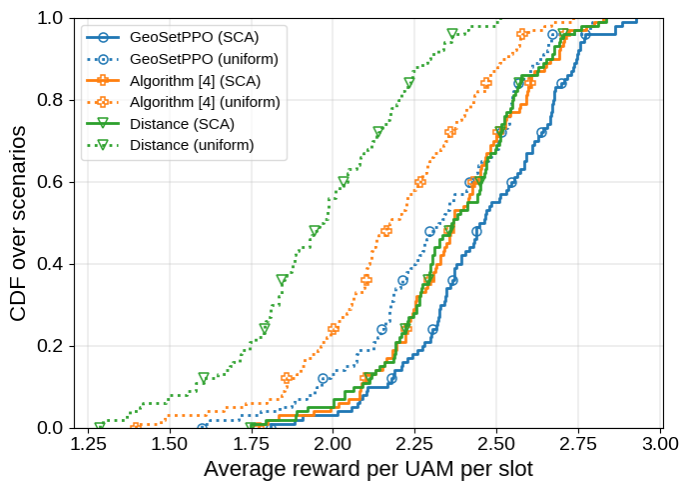}%
			\caption{Empirical CDF of the average reward per UAM over independent test scenarios with $(M,K,B)=(20,4,1)$. Solid and dotted curves represent SCA-based and uniform power allocation, respectively.}
			\label{fig:cdf_b1}
		}
	\end{center}
	\vspace{-10pt}
\end{figure}

Fig.~\ref{fig:sca_convergence} examines the convergence of the inner-loop SCA power allocator. For both $K=4$ and $K=7$, the per-slot reward increases rapidly over a few SCA iterations and then saturates. This indicates that the SCA module provides a fast improvement over uniform powers for a feasible fixed schedule while satisfying the per-GS power and per-link minimum SINR constraints.

The reward alone does not indicate whether the schedule admits a power vector satisfying the minimum SINR constraints. We therefore define the SCA outage probability as the fraction of test slots for which the fixed-schedule problem (P2) is infeasible and report it in Fig.~\ref{fig:sca_outage}. For $B=1$, the algorithm-based scheduler in~\cite{myjsac} yields the lowest outage probability because its association procedure explicitly avoids strong interference, whereas the distance-based scheduler frequently creates interfering links and becomes infeasible even at moderate thresholds. For $B=2$, applying RR subband allocation after association can place mutually interfering links on the same subband without accounting for their angular relationships. GeoSetPPO jointly selects association and subbands and consequently provides lower outage at the higher SINR thresholds. The distance-based method remains the least reliable in all four settings.

Table~\ref{tbl02} reports the measured execution times under the same implementation environment. The RR subband-allocation time for the association baselines is negligible and is therefore not reported separately. GeoSetPPO requires approximately $2.8$--$2.9$~ms per scheduling instance and changes little as the network grows from $(M,K)=(20,4)$ to $(50,7)$. The algorithm-based scheduler increases from $5.545$ to $40.84$~ms because of its combinatorial association procedure, while the distance-based heuristic remains the fastest scheduling method. The SCA power allocator is substantially more expensive than all discrete schedulers, requiring $297.6$~ms for $(20,4)$ and $1871.8$~ms for $(50,7)$ on average. These results distinguish the low inference latency of the learned scheduler from the computational cost of enforcing power and SINR constraints through online optimization.

\subsection{Performance and Robustness Analysis}

\begin{figure}[t]
	\begin{center}
		{\includegraphics[width=0.9\columnwidth,keepaspectratio]
			{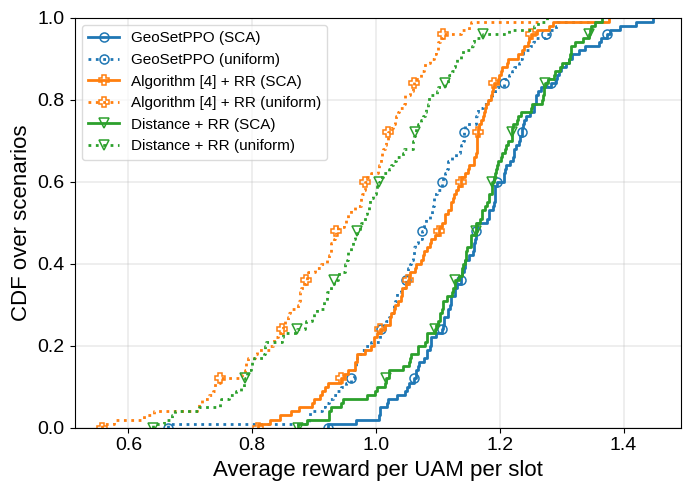}%
			\caption{Empirical CDF of the average reward per UAM over independent test scenarios with $(M,K,B)=(20,4,2)$. RR subband allocation is applied to the association baselines. Solid and dotted curves represent SCA-based and uniform power allocation, respectively.}
			\label{fig:cdf_b2}
		}
	\end{center}
	\vspace{-10pt}
\end{figure}

Fig.~\ref{fig:reward_comparison} compares the average reward for $K=4$ under two subband configurations and two handover-penalty scales. SCA-based power allocation improves the reward relative to uniform power for all schedulers, confirming the benefit of optimizing continuous powers after fixing the discrete schedule. GeoSetPPO achieves the highest or nearly highest average reward across the considered settings, with the clearest margin in the interference-limited $B=1$ case. The model without the final SCA-based fine-tuning stage remains competitive, but does not exploit the SCA-based reward as consistently. This comparison supports the intended role of the second training stage in aligning the scheduling policy with the power allocator used during deployment. For $B=2$, the distance-based method becomes more competitive in raw reward because RR separates part of the interference and short links provide strong desired signals.

Fig.~\ref{fig:cdf_b1} and Fig.~\ref{fig:cdf_b2} report the empirical CDFs of the average reward over independently generated test scenarios for $B=1$ and $B=2$, respectively. For $B=1$, GeoSetPPO with SCA shifts the reward distribution to the right of the baselines over most of the CDF, and the gap between its uniform- and SCA-based curves shows that power optimization provides gains beyond learned association. For $B=2$, the reward distributions of GeoSetPPO and the RR-augmented association baselines overlap substantially, indicating comparable raw reward over part of the test distribution. Because infeasible instances of (P2) use the uniform-power fallback described in Section~\ref{sec_proposed}, these CDFs do not directly characterize schedule feasibility. The reward comparison should therefore be considered jointly with Fig.~\ref{fig:sca_outage}, where GeoSetPPO yields substantially lower outage than the distance-based method and, at higher SINR thresholds, the algorithm-based method with RR.

\begin{figure}[t]
	\begin{center}
		{\includegraphics[width=0.9\columnwidth,keepaspectratio]
			{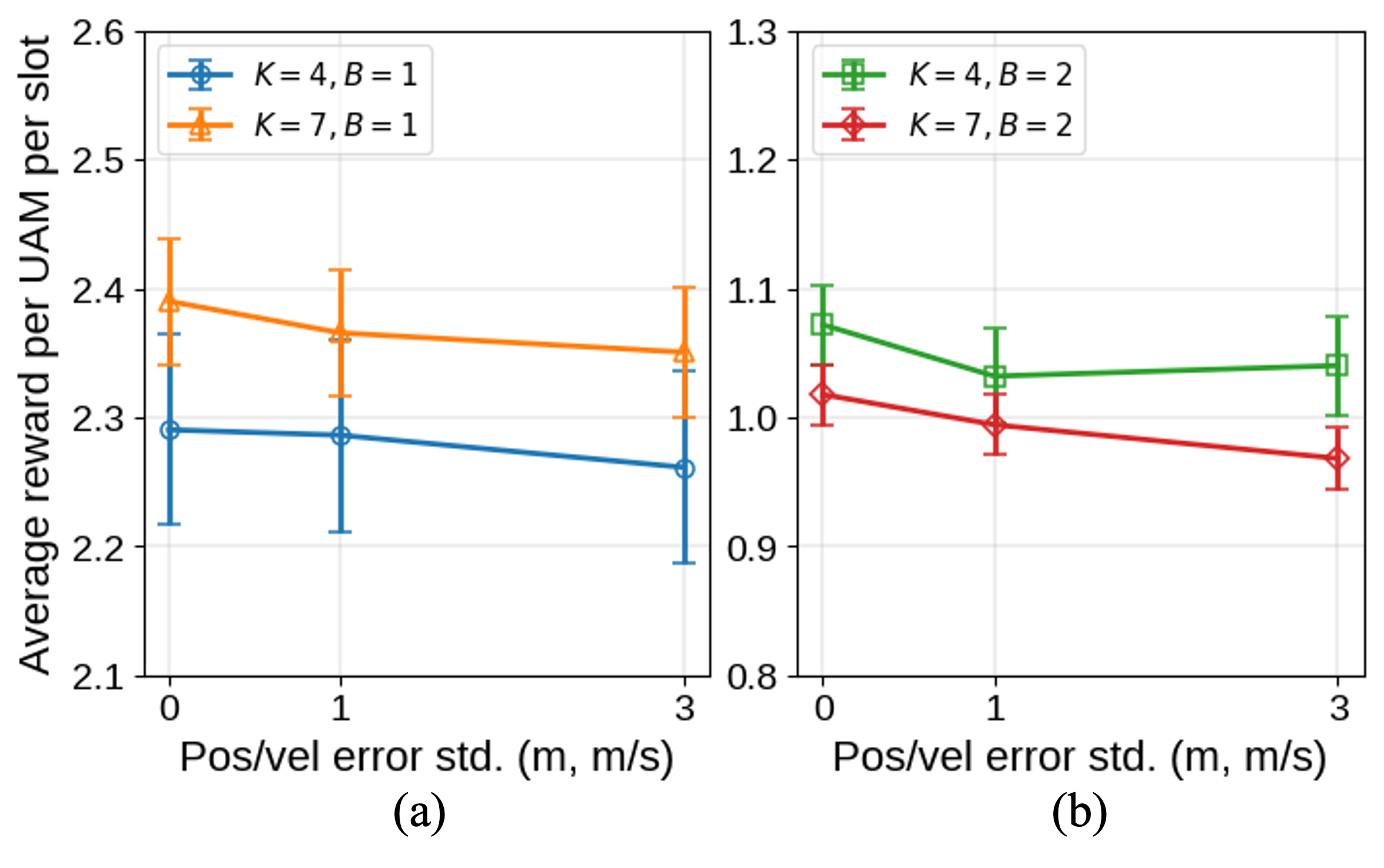}%
			\caption{Average reward of GeoSetPPO under position and velocity input errors. Each horizontal-axis value gives the common per-coordinate standard deviations of the zero-mean Gaussian position error in meters and velocity error in meters per second. Separate models are trained and evaluated under each matched error level.}
			\label{fig:state_noise}
		}
	\end{center}
	\vspace{-10pt}
\end{figure}

\begin{figure}[t]
	\begin{center}
		{\includegraphics[width=0.9\columnwidth,keepaspectratio]
			{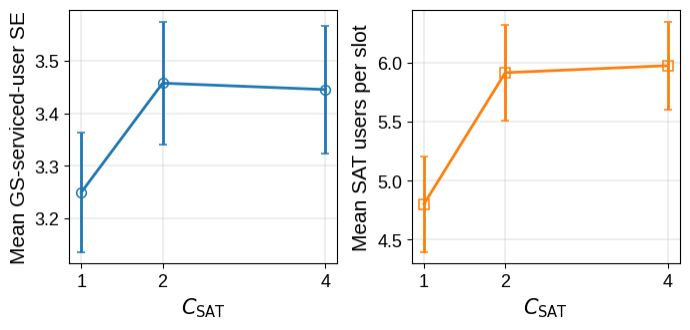}%
			\caption{Sensitivity of the GeoSetPPO scheduling outcome to the satellite-tier sum spectral efficiency $C_\text{SAT}$ for $(M,K,B)=(20,4,1)$. The two panels show the mean spectral efficiency per GS-served UAM and the mean number of satellite-served UAMs per slot, respectively.}
			\label{fig:csat_sensitivity}
		}
	\end{center}
	\vspace{-10pt}
\end{figure}

Fig.~\ref{fig:state_noise} evaluates the effect of imperfect mobility information. Independent zero-mean Gaussian errors are added to each position and velocity coordinate, with the same numerical standard deviation used in meters and meters per second, respectively. The scheduling reward changes only modestly as the error standard deviations increase from $0$ to $3$~m and from $0$ to $3$~m/s across both network sizes and subband settings. The result indicates that GeoSetPPO does not require unrealistically exact mobility information when the expected state-estimation errors are represented during training.

\begin{figure}[t]
	\begin{center}
		{\includegraphics[width=0.9\columnwidth,keepaspectratio]
			{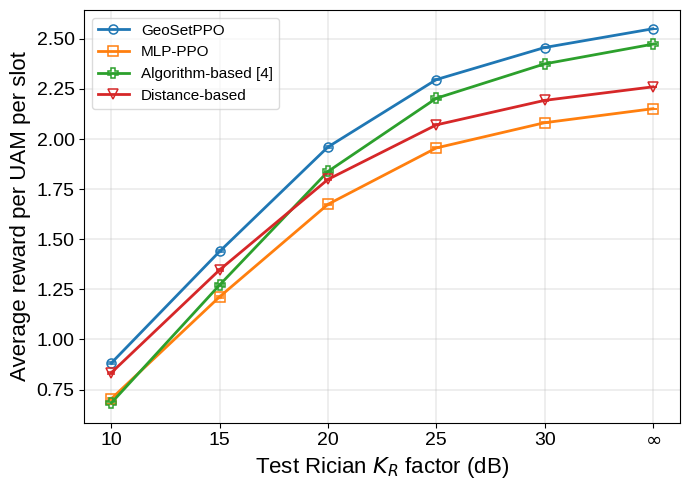}%
			\caption{Average reward per UAM versus the test-channel Rician $K_\mathrm{R}$ factor for $B=1$. Solid and dotted curves represent $(M,K)=(20,4)$ and $(50,7)$, respectively. The DRL models are trained with $K_\mathrm{R}=20$~dB.}
			\label{fig:rician_robustness}
		}
	\end{center}
	\vspace{-10pt}
\end{figure}

Fig.~\ref{fig:csat_sensitivity} compares the scheduling outcomes obtained when a separate GeoSetPPO policy is trained for each value of $C_\text{SAT}$. Increasing $C_\text{SAT}$ from $1$ to $2$ makes satellite association more favorable, increasing the mean number of satellite-served UAMs and reducing interference among the remaining GS-served UAMs. Consequently, the mean spectral efficiency per GS-served UAM also increases. Both quantities change only marginally when $C_\text{SAT}$ is further increased from $2$ to $4$.

Throughout the model development, GeoSetPPO uses geometry-based channel information rather than instantaneous fading realizations. To evaluate the resulting mismatch, Fig.~\ref{fig:rician_robustness} tests models trained at $K_\mathrm{R}=20$~dB over channels ranging from $K_\mathrm{R}=10$~dB to the pure-LoS case. The average reward decreases for all scheduling methods as the Rician factor decreases. Importantly, the baselines exhibit a similar decline, while GeoSetPPO preserves its performance ordering across the tested channel conditions and both network sizes. This common trend indicates that much of the degradation is caused by the reduction in achievable link quality under stronger NLoS components rather than by a failure of the learned scheduler outside its training condition.

\begin{figure*}[t]
	\begin{center}
		{\includegraphics[width=2.0\columnwidth,keepaspectratio]
			{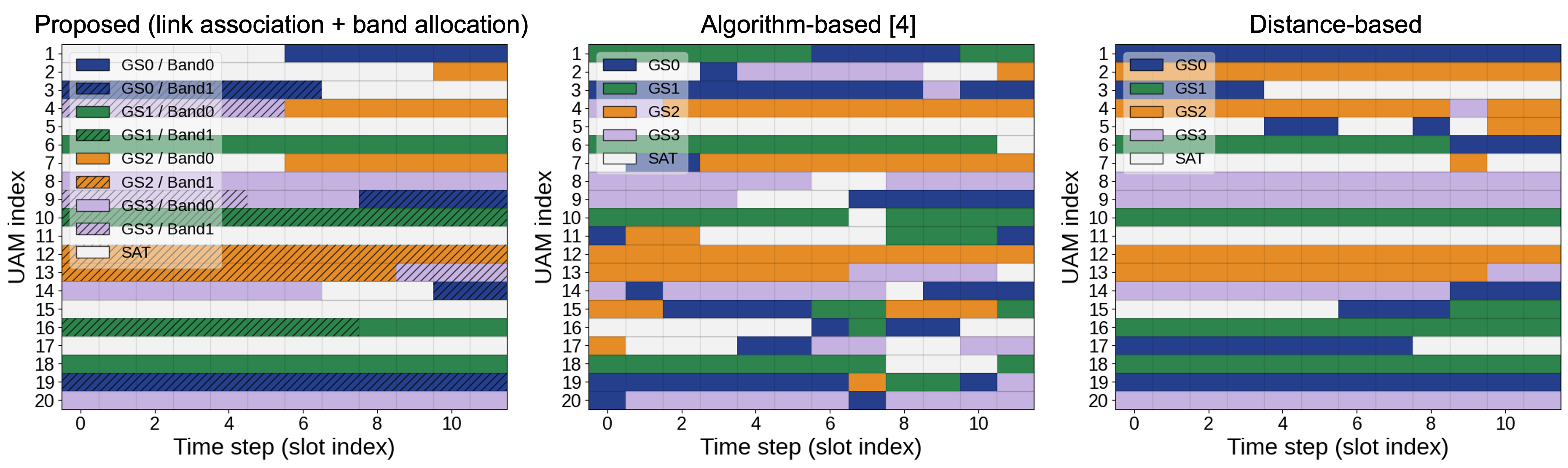}%
			\caption{Example schedules over 12 time steps in a random scenario with $(M,K)=(20,4)$. Left shows the proposed joint association and subband scheduling for $B=2$, where the second subband is indicated by hatch patterns.}
			\label{fig:schedule_example}
		}
	\end{center}
	\vspace{-13pt}
\end{figure*}

Finally, Fig.~\ref{fig:schedule_example} visualizes example schedules for $(M,K)=(20,4)$. GeoSetPPO yields more persistent link patterns over time and avoids unnecessary changes in association and subband usage. This scheduling stability is a principal benefit of the proposed finite-horizon formulation, which explicitly penalizes handovers and resource changes rather than optimizing only the instantaneous sum rate. Combined with sequential optimization over the considered mobility horizon, the use of velocity information allows the policy to anticipate near-future geometry and avoid short-lived switches that provide only transient rate gains. The algorithm-based scheduler exhibits frequent switching across slots, consistent with its snapshot-based decision structure. The distance-based method shows fewer changes due to the temporal correlation of distances, but it still lacks explicit interference awareness and does not optimize subband allocation.

\section{Conclusion}
\label{sec_conc}

This work studied mobility-aware downlink scheduling for UAM communications in a cooperative GS-satellite network with a multi-subband GS tier and geometry-structured interference. In contrast to the single-band formulation in~\cite{myjsac}, which maintains one association over a prediction interval, the proposed formulation permits sequential association and subband decisions while explicitly accounting for handovers over the scheduling horizon. We developed GeoSetPPO, which represents UAM-to-UAM and UAM-to-resource relations through geometry-aware set attention, and combined the learned discrete scheduler with an SCA-based GS power allocator under per-GS power and minimum-SINR constraints. The results demonstrated the importance of the proposed geometric inductive bias relative to conventional MLP- and Transformer-based PPO architectures. GeoSetPPO also achieved favorable reward performance relative to the association baselines while producing more feasible multi-subband schedules at moderate-to-high SINR thresholds. Future work will investigate efficient policy transfer or adaptation across different UAM counts, GS layouts, mobility distributions, subband configurations, and operating objectives.

\appendices

\section{Proof of Lemma~\ref{lemma1}}
\label{appen1}

We use the following logarithmic lower bound~\cite{myfsotraj}:
\begin{equation}
\label{label}
\begin{aligned}
\log(1+x)\geq \theta \log x + \kappa,
\end{aligned}
\end{equation}
where $\theta = \frac{x_0}{1+x_0}$ and $\kappa=\log(1+x_0)-\theta \log x_0$. Applying this approximation to $C_{kbp} = \log\big(1+\gamma_{kbp}\big)$ gives a tight lower bound at the point $x_0=\gamma_{kbp}(\boldsymbol{\rho}^{(\delta)})$, where $\gamma_{kbp}(\boldsymbol{\rho}^{(\delta)})$ can be computed using~\eqref{sinr0}. Thus, for each $kbp$, $\theta_{kbp}$ is given by~\eqref{thetat}, and $\kappa_{kbp}$ is
\begin{equation}
\label{label}
\begin{aligned}
\kappa_{kbp}^{(\delta)} = \log \bigg(1 + \frac{w_{kbp} \rho_{kbp}^{(\delta)}}{\mu_{kbp}^{(\delta)}}\bigg)-\theta_{kbp}^{(\delta)} \log\bigg(\frac{w_{kbp} \rho_{kbp}^{(\delta)}}{\mu_{kbp}^{(\delta)}}\bigg),
\end{aligned}
\end{equation}
where $\mu_{kbp}^{(\delta)}$ is defined in~\eqref{mut}. Using $\theta_{kbp}^{(\delta)}$, $\mu_{kbp}^{(\delta)}$, and $\kappa_{kbp}^{(\delta)}$, we obtain
\begin{equation}
\label{firstbound}
\begin{aligned}
&\log\bigg(1+\frac{w_{kbp}\rho_{kbp}}{\mu_{kbp}}\bigg)\\
&\geq \theta_{kbp}^{(\delta)}\{ \log(w_{kbp} \rho_{kbp}) - \log \mu_{kbp}\} + \kappa_{kbp}^{(\delta)},
\end{aligned}
\end{equation}
where $\mu_{kbp}=\sum_{\ell\neq k}^{K}\sum_{q=1}^{P_{\ell b}} w_{kbp}^{\ell bq} \rho_{\ell bq} + \sum_{q\neq p}^{P_{kb}} w_{kbp}^{kbq} \rho_{kbq} + \sigma_{\mathrm{n},B}^{2}$. Next, we apply the first-order Taylor upper bound of the concave function $\log \mu_{kbp}$ at $\mu_{kbp}^{(\delta)}$:
\begin{equation}
\label{logtaylor}
\begin{aligned}
&\log \mu_{kbp} \leq \log \mu_{kbp}^{(\delta)}+\frac{1}{\mu_{kbp}^{(\delta)}}\bigg\{\sum_{\ell\neq k}^{K} \sum_{q=1}^{P_{\ell b}} w_{kbp}^{\ell bq} (\rho_{\ell bq} - \rho_{\ell bq}^{(\delta)})\\
&\qquad\qquad+\sum_{q\neq p}^{P_{kb}} w_{kbp}^{kbq} (\rho_{kbq}-\rho_{kbq}^{(\delta)}) \bigg\}.
\end{aligned}
\end{equation}
Substituting~\eqref{logtaylor} into~\eqref{firstbound} yields
\begin{equation}
\label{secondbound}
\begin{aligned}
&C_{kbp}
\geq \theta_{kbp}^{(\delta)} \log(w_{kbp} \rho_{kbp})\\
&\qquad\quad-\frac{\theta_{kbp}^{(\delta)}}{\mu_{kbp}^{(\delta)}}\bigg(\sum_{\ell\neq k}^{K} \sum_{q=1}^{P_{\ell b}} w_{kbp}^{\ell bq}\rho_{\ell bq}+\sum_{q\neq p}^{P_{kb}} w_{kbp}^{kbq}\rho_{kbq}\bigg)+\zeta_{kbp}^{(\delta)},
\end{aligned}
\end{equation}
where $\zeta_{kbp}^{(\delta)}$ is defined in~\eqref{zetat}. Taking the negative of both sides gives the convex upper bound for $-C_{kbp}$ in~\eqref{approx01}.

\bibliographystyle{IEEEtran}
\bibliography{HJM_ref}

\end{document}